\title{A Simple Parallel Algorithm with Near-Linear Work\\for Negative-Weight Single-Source Shortest Paths}
\author{
    Nick Fischer\thanks{INSAIT, Sofia University ``St. Kliment Ohridski''. This research was partially funded from the Ministry of Education and Science of Bulgaria (support for INSAIT, part of the Bulgarian National Roadmap for Research Infrastructure). Parts of this work were done while the author was affiliated with Weizmann Institute of Science and visiting ETH Zurich. The visit was funded through the European Research Council (ERC) under the European Union's Horizon 2020 research and innovation program (ERC grant agreement 949272).} \and
    Bernhard Haeupler\thanks{INSAIT, Sofia University ``St. Kliment Ohridski'' and ETH Zurich.  Partially funded from the Ministry of Education and Science of Bulgaria (support for INSAIT, part of the Bulgarian National Roadmap for Research Infrastructure).  Also partially funded through the European Research Council (ERC) under the European Union's Horizon 2020 research and innovation program (ERC grant agreement 949272). } \and
    Rustam Latypov\thanks{Aalto University. Supported by the Research Council of Finland, Grant 334238.} \and
    Antti Roeyskoe\thanks{ETH Zurich. Partially funded through the European Research Council (ERC) under the European Union's Horizon 2020 research and innovation program (ERC grant agreement 949272). Parts of this work were done while the author was visiting INSAIT.
    Partially funded from the Ministry of Education and Science of Bulgaria (support for INSAIT, part of the Bulgarian National Roadmap for Research Infrastructure).} \and
    Aurelio L. Sulser\thanks{ETH Zurich.}}
\date{}
\begin{document}

\maketitle

\begin{abstract}
\noindent
We give the first parallel algorithm with optimal $\tilde{O}(m)$ work for the classical problem of computing Single-Source Shortest Paths in general graphs with negative-weight edges. 

\medskip
In graphs without negative edges, Dijkstra's algorithm solves the Single-Source Shortest Paths (SSSP) problem with optimal $\tilde O(m)$ work, but is inherently sequential. A recent breakthrough by [Bernstein, Nanongkai, Wulff-Nilsen; FOCS '22] achieves the same for general graphs. Parallel shortest path algorithms are more difficult and have been intensely studied for decades. Only very recently, multiple lines of research culminated in parallel algorithms with optimal work $\tilde O(m)$ for various restricted settings, such as approximate or exact algorithms for directed or undirected graphs without negative edges. For general graphs, the best known algorithm by [Ashvinkumar, Bernstein, Cao, Grunau, Haeupler, Jiang, Nanongkai, Su; ESA '24] still requires $m^{1+o(1)}$ work.

\medskip
This paper presents a randomized parallel algorithm for SSSP in general graphs with near-linear work $\tilde O(m)$ and state-of-the-art span $n^{1/2 + o(1)}$. We follow a novel bottom-up approach leading to a particularly clean and simple algorithm. Our algorithm can be seen as a \emph{near-optimal parallel black-box reduction} from SSSP in general graphs to graphs without negative edges. In contrast to prior works, the reduction in this paper is both parallel and essentially without overhead, only affecting work and span by polylogarithmic factors. 
\end{abstract}

\thispagestyle{empty}
\setcounter{page}{0}
\clearpage

\section{Introduction}
This paper provides a parallel algorithm with $\tilde{O}(m)$ work and state-of-the-art span $n^{1/2 + o(1)}$ for the classical SSSP problem in general graphs with negative-weight edges.

The \emph{Single-Source Shortest Paths (SSSP)} problem is one of the most fundamental problems in algorithmic graph theory, and among the most difficult and well-studied problems in parallel computing. The input consists of a directed graph with $n$ nodes, $m$ edges and integer edge weights (which in general may be negative), and a designated source vertex. The goal is to compute a shortest-path tree from the source to all other vertices (or find a negative cycle if one exists).

Without negative edges, shortest paths can be computed in near-linear time $O(m + n \log n) = \tilde{O}(m)$ by the classical algorithm of Dijkstra~\cite{Dijkstra59,Williams64,FredmanT84}. Unfortunately Dijkstra's algorithm is inherently \emph{sequential} and inherently requires \emph{nonnegative} edge weights. Lifting these constraints poses two algorithmic barriers that have been the subject of intense study in recent years.

On the one hand, achieving any \emph{parallel} algorithm has been a major open problem, even for unweighted, undirected graphs (i.e, even for simply computing a Breadth First Search tree). Indeed, no parallel shortest path or breadth first search algorithm with sub-linear $n^{1-\Theta(1)}$ span without a polynomial increase in work was known until 2023~\cite{RozhovnHMGZ23,CaoF23}. While exact parallel algorithms for general graphs remained out of reach until recently, parallel algorithms for (approximate) shortest paths in (undirected) graphs without negative edges have been intensely studied for decades and have been a driving force for discovery of new algorithmic techniques. Here, we give a short list of highlights among parallel shortest path algorithms over the last three decades, restricted to algorithms achieving at least roughly linear work and a state-of-the-art span, that is, polylogarithmic for undirected graphs and $n^{1/2 + o(1)}$ for directed graphs:
\begin{itemize}[itemsep=0pt]
    \item the '94 breakthrough of Cohen~\cite{Cohen00}, which gave a parallel algorithm with $O(m^{1+\epsilon})$ work for approximations in undirected graphs with no negative-weight edges, 
    \item the algorithms of Andoni, Stein and Zhong~\cite{AndoniSZ20} and Li~\cite{Li20} and Rozhon, Grunau, Haeupler, Zuzic and Li~\cite{RohzhvnGHZL22} improving this to $\tilde{O}(m)$ work for the same setting,
    \item another series of celebrated works by Fineman for directed graphs, which cumulated in the approximation algorithm for directed graphs without negative edges of Cao, Fineman and Russell~\cite{CaoFR20},
    \item and lastly, improvements to exact algorithms for directed graphs without negative edges by Rozhon, Haeupler, Martinsson, Grunau and Zuzic~\cite{RozhovnHMGZ23} and Cao and Fineman~\cite{CaoF23}.
\end{itemize}

On the other hand, the design of sequential algorithms that can deal with \emph{negative} edge weights has been equally challenging with research progress having stagnated at a complexity of~$\tilde{O}(m \sqrt{n})$ since Goldberg's 1993 scaling framework~\cite{Gabow83,GabowT89,Goldberg95}. Only in 2022, a breakthrough by Bernstein, Nanongkai, and Wulff-Nilsen~\cite{BernsteinNW22} resulted in a sequential shortest-path algorithm with near-optimal running time $\tilde{O}(m)$ (see also~\cite{BringmannCF23}). At the same time, independently,~\cite{ChenKLPGS22} devised an almost-optimal $m^{1+\order(1)}$ for Minimum-Cost Flow and thereby also for Negative-Weight SSSP. Notably, in another breakthrough Fineman~\cite{Fineman24,HuangJQ24} recently showed that there is a \emph{strongly polynomial} $\tilde\Order(m n^{8/9})$-time algorithm.

At the intersection of these two mostly disjoint lines of research, it is known that Goldberg's scaling algorithm can be ported to the parallel setting with work $\tilde\Order(m \sqrt n)$ and span~$n^{5/4+o(1)}$~\cite{CaoFR22}. Only very recently, Ashvinkumar, Bernstein, Cao, Grunau, Haeupler, Jiang, Nanongkai and Su~\cite{AshvinkumarBCGHJNS24} devised a parallel algorithm for Negative-Weight SSSP based on the breakthrough by Bernstein et al.~\cite{BernsteinNW22}. Their algorithm increases the total computational work to $m^{1+o(1)}$ and achieves span $n^{1/2+o(1)}$. Achieving any parallelism (say, any sublinear span) with near-optimal~$\tilde{O}(m)$ work was left as an open problem.

\subsection{Our Contribution}

In this work, we provide the first parallel algorithm for Negative-Weight Single-Source Shortest Paths with near-optimal $\tilde{O}(m)$ work and state-of-the-art span.

\begin{theorem}[Parallel Negative-Weight Single-Source Shortest Paths] \label{thm:parallel-nsssp}
There is a randomized parallel algorithm for Negative-Weight Single-Source Shortest Paths with work $\tilde{O}(m)$ and span $n^{1/2+o(1)}$.\footnote{Here and throughout we assume that all weights are polynomially bounded (i.e., that there is some constant~$c$ such that all weights are in $\{-n^c, \dots, n^c\}$). Weights from a larger range $\{-M, \dots, M\}$ only lead to a slowdown of $O(\log M)$.}
\end{theorem}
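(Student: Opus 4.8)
The plan is to cast the result as a \emph{reduction} from negative-weight SSSP to SSSP \emph{without} negative edges (Dijkstra's setting), in two layers, invoking as a black box a parallel non-negative-weight SSSP algorithm with $\tilde{O}(m)$ work and $n^{1/2+o(1)}$ span~\cite{RozhovnHMGZ23,CaoF23}; crucially we call it only a polylogarithmic number of times, so the reduction costs only polylogarithmic factors.

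\emph{Outer layer: scaling.} We may assume all weights lie in $\{-2^k,\dots,2^k\}$ with $k=O(\log n)$. Put $w_i(e):=\lceil w(e)/2^{k-i}\rceil$ for $i=0,\dots,k$, so $w_k=w$, and compute $\phi_i:=\mathrm{dist}_{w_i}(s,\cdot)$ for $i=0,1,\dots,k$ in turn. The coarsest instance $w_0$ already has weights $\ge -1$ since $w(e)\ge -2^k$. From the elementary inequality $w_{i+1}(e)\ge 2w_i(e)-1$ together with the triangle inequality $\phi_i(v)\le \phi_i(u)+w_i(u,v)$ one checks that the graph with edge weights $w_{i+1}(u,v)+2\phi_i(u)-2\phi_i(v)$ again has all weights $\ge -1$; solving SSSP there and undoing the potential shift yields $\phi_{i+1}$. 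After $k+1$ rounds $\phi_k=\mathrm{dist}_w(s,\cdot)$, and one further parallel Dijkstra on the graph reweighted by $\phi_k$ (now non-negative) extracts the shortest-path tree. A negative cycle, if present, manifests as a negative cycle of some restricted instance (potential shifts preserve cycle weights), which we detect and report. This layer multiplies work and span by only $O(\log n)$.

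\emph{Inner layer: weights at least $-1$.} It remains to solve SSSP in a graph $G$ all of whose weights are $\ge -1$, within $\tilde{O}(m)$ work and $n^{1/2+o(1)}$ span. Let $\eta_G$ be the maximum over vertices $v$ of the least number of negative edges on a shortest $s$-$v$ walk. If $\eta_G=\mathrm{polylog}(n)$ the problem is easy in parallel: run $\eta_G+1$ phases, each one Bellman--Ford round relaxing every negative edge followed by a full parallel Dijkstra on the non-negative subgraph seeded with the current labels; this computes $\mathrm{dist}_G(s,\cdot)$ with $O(\eta_G)$ Dijkstra calls, hence $\tilde{O}(m)$ work and $n^{1/2+o(1)}$ span. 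For general $G$ we reduce to this case with a randomized decomposition with respect to negative edges: delete each negative edge independently with a small probability so that, with high probability, inside every strongly connected component of the remaining negative-edge subgraph all shortest walks use only $\mathrm{polylog}(n)$ negative edges, while only a $1/\mathrm{polylog}(n)$ fraction of negative edges is deleted. Then we sweep the components in \emph{reverse topological order of the condensation} --- the ``bottom-up'' step --- re-inserting the deleted edges using the potentials already computed for components further along, which keeps all residual weights $\ge -1$; repeating the decompose-and-sweep $\mathrm{polylog}(n)$ times removes all negative edges. Each of the $\mathrm{polylog}(n)$ sub-instances is solved by the small-$\eta$ subroutine, so the totals remain $\tilde{O}(m)$ work and $n^{1/2+o(1)}$ span. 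Finally, a candidate shortest-path tree is verifiable in $\tilde{O}(m)$ work and $O(\log n)$ span --- check that every reduced cost is non-negative and tight along tree edges --- so the algorithm can be made Las Vegas.

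\emph{Main obstacle.} Everything outside the inner-layer decomposition is routine parallel bookkeeping (parallel Dijkstra, parallel strongly-connected components and condensation, Bellman--Ford rounds, prefix sums). The heart of the matter is to design the negative-edge decomposition and its bottom-up processing so that simultaneously (i) it runs within the $\tilde{O}(m)$-work, $n^{1/2+o(1)}$-span budget, (ii) re-inserting deleted negative edges in condensation order never places more than $\mathrm{polylog}(n)$ negative edges on any single shortest path, so the small-$\eta$ subroutine stays cheap at each of the $\mathrm{polylog}(n)$ sub-instances, and (iii) the failure probabilities over all sub-instances and all $O(\log n)$ scales compose to leave the whole algorithm correct with high probability. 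This is exactly where the paper's bottom-up idea should replace the recursive top-down decomposition of~\cite{BernsteinNW22} and eliminate the $n^{o(1)}$ overhead incurred by~\cite{AshvinkumarBCGHJNS24}.
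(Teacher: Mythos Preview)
Your outer scaling layer and the small-$\eta$ subroutine are essentially correct and match the paper (the paper packages the scaling as a reduction to \emph{restricted} graphs, \cref{lem:restricted-suffices}, and replaces the Bellman--Ford/Dijkstra interleaving by the layered-graph construction of \cref{lem:dijkstra-potential}, but these are equivalent up to presentation). The genuine gap is exactly the part you flag as the ``main obstacle'': your proposed decomposition is not the paper's, and as stated it does not work.

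You propose to delete each negative edge independently with small probability and argue that inside every SCC of the remaining graph all shortest walks use only $\poly(\log n)$ negative edges. There is no reason this should hold: a shortest path may carry, say, $\sqrt n$ negative edges, and deleting a $1/\poly(\log n)$ fraction of them still leaves $\Omega(\sqrt n)$ on the path; meanwhile alternative paths can keep the endpoints in the same SCC, so the SCC structure gives you nothing. More fundamentally, random deletion of negative edges ignores the edge \emph{weights}, whereas controlling $\eta$ on shortest paths is a metric phenomenon. The paper's decomposition is quite different: it runs a \emph{directed low-diameter decomposition on $G_{\geq 0}$} (the graph with negative weights clipped to $0$) at geometrically increasing diameters $2^\ell$. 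The crucial structural fact (\cref{lem:rep-paths-short}) is that in a \emph{restricted} graph --- weights $\geq -1$ \emph{and} minimum cycle mean $\geq 1$ --- every shortest $S$-$v$ path inside a weak-diameter-$d$ cluster has $w_{\geq 0}$-length at most $d$; hence the LDD at level $\ell-1$ cuts only $O(Q(n))=\poly(\log n)$ of its edges in expectation, and the potential computed at level $\ell-1$ already makes all but $\poly(\log n)$ of its edges nonnegative. Your scaling reduction only guarantees weights $\geq -1$, not the cycle-mean condition, so even if your decomposition were sound you would be missing the input hypothesis that drives \cref{lem:rep-paths-short}.

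In short, you correctly isolated where the difficulty lies, but the mechanism you sketch (random negative-edge deletion plus condensation order) is not a substitute for the paper's LDD-on-$G_{\geq 0}$ at increasing scales; the ``bottom-up'' in the paper refers to the diameter levels $2^1,2^2,\dots$, not to a topological sweep over a single decomposition.
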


We obtain the result through a conceptually new way of thinking about the Bernstein-Nanongkai-Wulff-Nilsen algorithm in a bottom-up rather than a top-down fashion. This bottom-up approach leads to a much simpler parallel  algorithm.

Similar to prior works~\cite{BernsteinNW22,BringmannCF23,AshvinkumarBCGHJNS24} our algorithm can be seen as a black-box reduction from the SSSP problem in general graphs to graphs without negative edges. In contrast to prior works, the reduction in this paper is both parallel and essentially optimal, i.e., the overhead for both work and span are merely polylogarithmic factors. 

\begin{theorem}[Black-Box Reduction]\label{thm:parallel-nsssp-reduction}
If there is a parallel Nonnegative-Weight SSSP algorithm with work $W_{\geq 0}(n, m)$ and span $S_{\geq 0}(n, m)$, then there is a randomized parallel Negative-Weight SSSP algorithm with work $W_{\geq 0}(n, m) \cdot \poly(\log n)$ and span $S_{\geq 0}(n, m) \cdot \poly(\log n)$. 
\end{theorem}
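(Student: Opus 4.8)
The plan is to reduce the whole problem to computing a \emph{feasible price function}: a potential $\phi$ with $w(u,v)+\phi(u)-\phi(v)\ge 0$ on every edge, or a certificate that a negative cycle exists. Given such a $\phi$, one call to the assumed nonnegative-weight oracle on the reweighted graph $w_\phi$ followed by standard tree recovery solves Negative-Weight SSSP with only polylogarithmic overhead, so the entire content is to build $\phi$. The first step is the standard scaling decomposition of the weights into $O(\log(nW)) = O(\log n)$ bit levels: letting $G^{(i)}$ be the graph whose weights are $w$ rounded up to the nearest multiple of $2^i$, one has $G^{(0)}=G$ and, once $2^i>W$, $G^{(i)}$ has no negative edges so $\phi\equiv 0$ is feasible for it. Processing levels \emph{bottom-up}, from $i=\lceil\log W\rceil$ down to $0$ (this is the sense in which the algorithm is bottom-up), I maintain a feasible price function $\phi_{i}$ for $G^{(i)}$; since descending one level lowers each weight by at most $2^i$, the graph $G^{(i)}$ equipped with $\phi_{i+1}$ has, after dividing by $2^i$, integer weights all $\ge -1$, so obtaining $\phi_i$ from $\phi_{i+1}$ is exactly one instance of the single \emph{restricted} problem: integer weights $\ge -1$, output a feasible price function or a negative cycle. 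This layer bookkeeping is purely sequential composition of price functions and costs only an $O(\log n)$ factor in work and span; a negative cycle of $G$ survives rounding at level $0$, so it is detected by the restricted solver there.

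The heart of the argument is solving the restricted problem with $\poly(\log n)$ oracle calls. I recast the Bernstein--Nanongkai--Wulff-Nilsen machinery bottom-up: a graph in which every shortest path uses at most $\eta$ negative edges can be solved with $O(\eta)$ rounds of ``for every negative edge, decrease its head's potential to the minimum incoming reduced weight, then re-invoke the nonnegative oracle from a super-source'' — an oracle-only implementation of the Bellman--Ford/Dijkstra hybrid (\textsc{ElimNeg}) that reports a negative cycle if it fails to stabilize within the allotted rounds. The remaining task is to force $\eta=\poly(\log n)$. Starting from the a priori bound $\eta\le n$, I would recursively apply a randomized low-diameter decomposition — implemented via oracle-based ball growing on the nonnegative part $G_{\ge 0}$ (negative weights zeroed out) — which with high probability cuts only a $\poly(\log n)/R$ fraction of the negative edges while producing pieces in which the relevant negative-depth budget has dropped by a constant factor. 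Recursing into the (vertex-disjoint) pieces with the halved budget, composing the returned price functions, and cleaning up the cut negative edges with one \textsc{ElimNeg} pass on the contracted quotient graph (whose negative-depth is again small) yields a feasible $\phi$ for the whole restricted instance; the recursion has depth $O(\log n)$ because the budget halves at each level.

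For the complexity bounds I would charge work level by level: the pieces at any recursion level are vertex-disjoint, so all oracle calls at that level cost $\tilde O(W_{\ge 0}(n,m))$ in total and $\poly(\log n)\cdot S_{\ge 0}(n,m)$ in span; multiplying by the $O(\log n)$ recursion depth and the $O(\log n)$ bit levels gives $W_{\ge 0}(n,m)\cdot\poly(\log n)$ work and $S_{\ge 0}(n,m)\cdot\poly(\log n)$ span, as claimed in Theorem~\ref{thm:parallel-nsssp-reduction}. The main obstacle I anticipate is the probabilistic analysis of the recursion: one must show that, with high probability \emph{simultaneously} across all $\poly(\log n)$-many levels and pieces, the low-diameter decomposition both keeps the per-piece negative-depth polylogarithmic (so every \textsc{ElimNeg} invocation, and in particular its span, stays under control) and cuts few enough negative edges that the quotient graph remains tractable — and, crucially, that this remains sound in the presence of negative cycles, so that non-convergence of \textsc{ElimNeg} is a valid certificate. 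A secondary but pervasive subtlety is oracle hygiene: every primitive the top-down algorithm uses (single-source distances, Bellman--Ford sweeps, SCC-based reweighting, the decomposition itself) must be re-expressed as a bounded number of black-box calls to the nonnegative-SSSP oracle on lightly modified graphs, never inspecting the oracle's internals — which is precisely what makes the analysis double as the reduction of Theorem~\ref{thm:parallel-nsssp-reduction}.
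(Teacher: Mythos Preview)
Your scaling step is correct and matches the paper's use of \cref{lem:restricted-suffices}. The divergence is in how you solve the \emph{restricted} problem, and your reading of ``bottom-up'' is the wrong one: the paper's novelty is \emph{not} the outer weight-scaling loop (that is the standard Goldberg/BNW reduction), but the inner algorithm for Restricted SSSP, which iterates over \emph{diameter scales} $2^1,2^2,\dots,2^L$ from small to large with no recursion at all. Concretely, at level~$\ell$ the algorithm computes $R=O(\log n)$ independent directed LDDs of $G_{\ge 0}$ with diameter $2^\ell$; inside each cluster $S_i$ it computes $\dist_{G[S_i]}(S_i,\cdot)$ by calling \textsc{FewNegSSSP} (a layered-graph reduction to the nonnegative oracle, \cref{lem:dijkstra-potential}) using \emph{each} of the potentials produced at level $\ell-1$, and takes the pointwise minimum. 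The correctness hinges on \cref{lem:rep-paths-short}: any shortest $S_i$-$v$-path has $w_{\ge 0}$-length at most $2^\ell$, so in expectation a level-$(\ell-1)$ LDD cuts it only $O(Q(n))$ times; because the $R$ level-$(\ell-1)$ LDDs are mutually independent \emph{and independent of the level-$\ell$ LDD}, one of them succeeds on every relevant path with high probability.

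What you sketch instead is the \emph{top-down} recursion of Bernstein--Nanongkai--Wulff-Nilsen: decompose, recurse into pieces with a halved negative-depth budget, then clean up with \textsc{ElimNeg}. This is precisely the approach of \cite{AshvinkumarBCGHJNS24}, which the present paper is improving; that work obtains $m^{1+o(1)}$ rather than $\tilde O(m)$. The gap in your argument is the sentence ``with high probability \emph{simultaneously} across all levels and pieces'': in the recursive scheme, the shortest paths you must control at level $\ell$ are determined by the random LDDs at levels $1,\dots,\ell-1$, so you cannot simply run $O(\log n)$ independent LDDs per level and union-bound. Boosting the whole recursion to high probability is exactly where the polylogarithmic budget leaks in prior work, and you have not said how you avoid it. The paper sidesteps this entirely by inverting the loop order: at level $\ell$ it already holds a \emph{set} $\Phi_{\ell-1}$ of candidate potentials, and the success event (\cref{def:success-event}) only asks that for each fixed vertex and cluster, \emph{some} $\phi\in\Phi_{\ell-1}$ works---a statement about independent samples that yields a clean union bound. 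That decoupling, together with replacing the Dijkstra/Bellman-Ford interleaving by the one-shot layered construction of \cref{lem:dijkstra-potential}, is the content you are missing.
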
 

We obtain \Cref{thm:parallel-nsssp} by combining this reduction with the best known algorithms for parallel Nonnegative-Weight SSSP~\cite{CaoF23,RozhovnHMGZ23} with work $\tilde{O}(m)$ and span~$n^{1/2 + o(1)}$. Our reduction ensures that any future progress on the parallel span for Nonnegative-Weight SSSP (even for approximate shortest paths, due to the black-box reduction provided in~\cite{RozhovnHMGZ23}) immediately translates to the same improvements for exact algorithms in general graphs. Like~\cite{AshvinkumarBCGHJNS24} our algorithm work in the very general minor-aggregation framework of~\cite{ZuzicGYHS21,RohzhvnGHZL22,GhaffariZ22} which allows designing algorithms that simultaneously apply to parallel, distributed and other settings in a model-independent way.

Besides its applications to parallel computing, we are confident that our conceptual insights will lead to further progress on the Negative-Weight SSSP problem, particularly deterministic sequential and parallel algorithms, as well as algorithms for distributed and other models of computation.

\paragraph{Paper Structure.}
In \Cref{sec:prelims}, we give an overview of the techniques developed in prior work on the parallel negative Single-Source Shortest Paths problem. In \Cref{sec:main-result}, we present our new bottom-up approach and prove \Cref{thm:parallel-nsssp-reduction}.
\section{Preliminaries}\label{sec:prelims}
We write $[n] = \set{1, \dots, n}$ and set $\poly(n) = n^{\Order(1)}$ and $\tilde\Order(n) = n (\log n)^{\Order(1)}$.

Throughout the paper, we consider the standard PRAM model for parallel computation.

Following standard terminology, we say that an event happens \emph{with high probability} if it happens with probability at least $1 - 1/n^c$, where $c$ is an arbitrarily large prespecified constant (and $n$ is the input size). A randomized algorithm is \emph{Monte Carlo} if it succeeds with high probability (and may err with low probability), and we say that a randomized algorithm is \emph{Las Vegas} if it succeeds with probability~$1$.

\subsection{Negative-Weight SSSP}
We denote directed graphs by $G = (V, E, w)$ with vertex sets $V$, directed edge sets $E \subseteq V^2$ and (potentially negative) integer edge weights $w : E \to \mathbb{Z}$. If $G$ does not contain negative-length cycles, we write $\dist_G(v, u)$ to denote the length of a shortest $v$-$u$-path. For $S \subseteq V$, we say that~$P$ is a \emph{shortest $S$-$v$-path} if~$P$ has smallest length among all paths starting in some node in~$S$ and ending in $v$. We also set $\dist_G(S, v) = \min_{u \in S} \dist_G(u, v)$. (We often drop the subscript $G$ when it is clear from context.) Finally, we say that a set $S \subseteq V$ has \emph{weak diameter $d$} in $G$ if for all $u, v \in S$ it holds that $\dist_G(u, v) \leq d$ and $\dist_G(v, u) \leq d$.

\begin{definition}[SSSP]
The \emph{Single-Source Shortest Paths (SSSP)} problem is, given a directed graph $G = (V, E, w)$ (with possibly negative integer edge weights $w$) and a source vertex $s \in V$, to
\begin{itemize}
    \item output a shortest path tree from $s$, or
    \item output a negative cycle in $G$.
\end{itemize}
\end{definition}

The \emph{Nonnegative-Weight SSSP} problem is the natural restriction to graphs with nonnegative edge lengths. Throughout, let us denote by~$W_{\geq 0}(n, m)$ and $S_{\geq 0}(n, m)$ the work and span of any parallel algorithm for nonnegative SSSP. We make the reasonable assumption that 
\begin{align}\label{eq:work-span-additive}
\begin{split}
    W_{\geq 0}(n_1, m_1) + W_{\geq 0}(n_2, m_2) &\leq W_{\geq 0}(n_1 + n_2, m_1 + m_2), \\
    \max(\,S_{\geq 0}(n_1, m_1),\, S_{\geq 0}(n_2, m_2)\,) &\leq S_{\geq 0}(n_1 + n_2, m_1 + m_2).
\end{split}
\end{align}
The currently best-known algorithms for Nonnegative-Weight SSSP achieve $W_{\geq 0}(n,m) = \tilde{O}(m)$ and $S_{\geq 0}(n, m) = n^{1/2 + o(1)}$~\cite{CaoF23,RozhovnHMGZ23}.

Throughout, for the sake of simplicity we assume that the edge weights in the given graph are polynomially bounded (i.e., there is some constant $c$ such that all edge weights satisfy $|w(e)| \leq n^c$). We emphasize that this restriction is indeed merely for simplicity, and that larger edge weights up to $M$ only lead to an overhead of $\log M$ in the work and span of our algorithm.

\subsection{Directed Low-Diameter Decompositions}
A crucial ingredient and one of the main innovations of~\cite{BernsteinNW22} is the use of a Low-Diameter Decomposition (LDD) in \emph{directed} graphs. Specifically, we rely on the following lemma of~\cite{AshvinkumarBCGHJNS24} which already ported directed LDDs to the parallel setting:

\begin{theorem}[Directed LDD~{{\cite[Theorem 1.4]{AshvinkumarBCGHJNS24}}}] \label{thm:dirldd}
    Let $G = (V, E, w)$ be a directed graph with nonnegative edge weights $w : E \rightarrow \mathbb{Z}_{\geq 0}$ and let $d \geq 1$. There is a randomized algorithm computing a set of edges $E^{\mathrm{rem}} \subseteq E$ with the following guarantees:
    \begin{itemize}
        \item Each strongly connected component of the graph $E \setminus E^{\mathrm{rem}}$ has weak diameter at most $d$ in $G$, i.e. if $u, v$ are two vertices in the same SCC, then $\dist_G(u, v) \leq d$ and $\dist_G(v, u) \leq d$.
        \item For any $e \in E$, we have $\prob(e \in E^{\mathrm{rem}}) = \bigO\left(\frac{w(e) \log^2 n}{d} + \frac{1}{n^8}\right)$.
    \end{itemize}
    The algorithm has work $W_{\geq 0}(n, m) \cdot \poly(\log n)$ and span $S_{\geq 0}(n, m) \cdot \poly(\log n)$ span (assuming that $w$ is polynomially bounded).
\end{theorem}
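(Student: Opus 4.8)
The plan is to parallelize the directed low-diameter decomposition of Bernstein, Nanongkai and Wulff-Nilsen~\cite{BernsteinNW22}. The basic primitive is \emph{ball carving with a geometric radius}: in a graph $H$, for a vertex $v$ and $r \geq 0$, write $B^{\mathrm{out}}_H(v, r) = \{u : \dist_H(v, u) \leq r\}$ and $B^{\mathrm{in}}_H(v, r) = \{u : \dist_H(u, v) \leq r\}$. Fix $p = \Theta((\log n)/d)$, sample $R$ from a geometric distribution with parameter $p$, and carve the out-ball $B = B^{\mathrm{out}}_H(v, R)$. An edge $e = (x, y)$ leaves $B$ (that is, $x \in B$ but $y \notin B$) only if $\dist_H(v, x) \leq R < \dist_H(v, y)$; since $\dist_H(v, y) \leq \dist_H(v, x) + w(e)$, the memorylessness of the geometric distribution bounds the probability of this event by $1 - (1 - p)^{w(e)} \leq p\, w(e) = O((w(e) \log n)/d)$. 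Moreover $\prob(R > d/4) \leq (1 - p)^{d/4} \leq n^{-10}$ for a suitable choice of the constant in $p$, so by a union bound over the $\poly(n)$ balls carved throughout the algorithm, with probability at least $1 - n^{-8}$ every carved radius is at most $d/4$; on the complementary bad event I would just put all of $E$ into $E^{\mathrm{rem}}$, which is where the additive $1/n^8$ term comes from.

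\emph{One level of the decomposition.} Run the decomposition recursively; a single recursive call operates on an induced subgraph $G[V']$ with parameter $d$, and all balls below are taken in $G[V']$. Call $v \in V'$ \emph{out-light} if $|B^{\mathrm{out}}_{G[V']}(v, d/4)| \leq \tfrac34 |V'|$, \emph{in-light} if $|B^{\mathrm{in}}_{G[V']}(v, d/4)| \leq \tfrac34 |V'|$, and \emph{heavy} otherwise. If $u, v$ are both heavy, then $B^{\mathrm{out}}_{G[V']}(u, d/4)$ and $B^{\mathrm{in}}_{G[V']}(v, d/4)$ each contain more than $\tfrac34|V'|$ vertices and hence intersect in some $z$, so $\dist_{G[V']}(u, v) \leq d/2$; thus the heavy vertices already form a set of weak diameter at most $d$ in $G[V']$ and require no edge deletion. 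The light vertices are peeled into recursive subproblems: first carve geometric-radius out-balls around all out-light centers so that every out-light vertex ends up in some carved ball, add all edges leaving a carved ball to $E^{\mathrm{rem}}$, and recurse on each carved induced subgraph with the same parameter $d$; then do the symmetric thing with in-balls around the still-remaining in-light centers. After both phases the remaining vertices are heavy, hence done. Each carved ball has at most $\tfrac34|V'|$ vertices (using that the radius is at most $d/4$), so the recursion depth is $O(\log n)$, and correctness composes since distances only increase when passing to an induced subgraph, so weak diameter at most $d$ with respect to $\dist_{G[V']}$ gives the same bound with respect to $\dist_G$.

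\emph{Parallel implementation.} The only obstacle to parallelism is that the sequential algorithm carves balls one at a time. To carve all out-balls of a level at once I would run an ``exponential race'': add a virtual super-source $s^\star$, connect it to each out-light center $v$ by an edge of weight $\lceil d/4 \rceil - R_v \geq 0$ (we have conditioned on $R_v \leq d/4$), compute distances from $s^\star$ with a single invocation of the assumed Nonnegative-Weight SSSP algorithm, and assign each vertex at distance at most $\lceil d/4\rceil$ from $s^\star$ to the center ``reaching it first'', breaking ties by center identifier; this carves the out-light vertices into disjoint balls with one oracle call, and symmetrically on the reverse graph for the in-balls. Since the induced subgraphs appearing at a fixed recursion level are vertex-disjoint, their races can be run as a single oracle call on their disjoint union, of total size at most $(n, m)$; the light/heavy classification likewise reduces to $O(1)$ bounded-distance shortest-path queries. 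Hence each of the $O(\log n)$ levels costs $W_{\geq 0}(n, m) \cdot \poly(\log n)$ work and $S_{\geq 0}(n, m) \cdot \poly(\log n)$ span, and summing over levels --- invoking the subadditivity assumption~\eqref{eq:work-span-additive} --- yields the claimed bounds.

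\emph{Main obstacle.} The delicate point is the probability analysis of the parallel race: once many out-balls grow simultaneously, an edge $e = (x, y)$ is cut at a given level exactly when the race separates $x$ and $y$, and one must argue this still has probability $O(p\, w(e))$ --- morally because each edge's fate in the race is decided by a single effective radius threshold, so the one-ball calculation above goes through with a maximum over centers in place of a single $R$. Summing $O(p\, w(e))$ over the $O(\log n)$ recursion levels and the two phases then gives $\prob(e \in E^{\mathrm{rem}}) = O((w(e) \log^2 n)/d) + n^{-8}$. A secondary subtlety is making sure that every ball computation is genuinely a \emph{bounded-distance} shortest-path query --- so that an oracle for \emph{nonnegative} SSSP really suffices --- and that truncating the geometric radius at $d/4$ does not bias the cut probabilities; both follow from the high-probability radius bound established above.
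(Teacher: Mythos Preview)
This theorem is not proved in the paper; it is quoted from~\cite[Theorem~1.4]{AshvinkumarBCGHJNS24} and used as a black box, so there is no in-paper proof to compare your proposal against.

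On the merits of your sketch: the skeleton---BNW's heavy/light classification, geometric-radius ball carving, $O(\log n)$ recursion depth, and collapsing each recursion level to a single nonnegative SSSP call via a super-source with shifted start times---is the right outline. The piece you correctly flag as the ``main obstacle'' is, however, a genuine gap rather than a detail. Your heuristic that ``each edge's fate in the race is decided by a single effective radius threshold'' is exactly what drives the undirected Miller--Peng--Xu analysis: there one has $|\dist(v,x)-\dist(v,y)|\le w(e)$ for \emph{every} center $v$, so the cut event reduces to the gap between the two smallest shifted distances at a single endpoint, and memorylessness of the radius distribution finishes. In a directed graph only the one-sided inequality $\dist(v,y)\le\dist(v,x)+w(e)$ holds; a center can be much closer to $y$ than to $x$, the winners at $x$ and at $y$ can differ without any single radius variable crossing a threshold, and taking ``a maximum over centers in place of a single $R$'' does not recover the $O(p\,w(e))$ bound. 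Establishing the per-edge cut probability under simultaneous directed ball growth is precisely where the technical work in~\cite{AshvinkumarBCGHJNS24} lies, and it requires substantially more than the one-line justification you give. You have located the difficulty correctly but not closed it.
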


In what follows it will be convenient to additionally assume that the strongly connected components are topologically sorted. For this reason, we combine \cref{thm:dirldd} with a parallel algorithm for computing topological orderings (e.g.,~\cite[Lemma 5.1]{AshvinkumarBCGHJNS24} with work $W_{\geq 0}(n, m)$ and span $S_{\geq 0}(n, m)$ up to polylogarithmic overhead), to immediately obtain the following lemma:

\begin{lemma}[Topologically Sorted LDD]\label{cor:topo-sorted-directed-ldd}
Let $G = (V, E, w)$ be a directed graph with nonnegative edge weights $w : E \to \Int_{\geq 0}$, and let~\makebox{$d \geq 1$}. There is a randomized algorithm $\textsc{DirLDD}(G, d)$ that computes a partition $\calS = (S_1, \dots, S_k)$ of~$V$ satisfying the following two properties:
\begin{itemize}
    \item For all $i \in [k]$ and for all vertices $u, v \in S_i$, we have $\dist_G(u, v) \leq d$ and $\dist_G(v, u) \leq d$.
    \item We say that an edge $e = (u, v) \in E$ is \emph{cut} if, for the unique indices $i, j$ with $u \in S_i, v \in S_j$, it holds that $i > j$. The probability that $e$ is cut is at most~\smash{$\lddqual(n) \cdot \frac{w(e)}{d} + \bigO(\frac{1}{n^8})$}, where $Q(n) = \Order(\log^2 n)$.
\end{itemize}
The algorithm has work $W_{\geq 0}(n, m) \cdot \poly(\log n)$ and span $S_{\geq 0}(n, m) \cdot \poly(\log n)$ (assuming that~$w$ is polynomially bounded).
\end{lemma}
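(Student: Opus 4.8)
The plan is to derive \Cref{cor:topo-sorted-directed-ldd} from \Cref{thm:dirldd} by two small post-processing steps: first passing from the edge set $E^{\mathrm{rem}}$ to the vertex partition induced by strongly connected components of $E \setminus E^{\mathrm{rem}}$, and second sorting those components topologically. Concretely, I would run the algorithm of \Cref{thm:dirldd} on $G$ with parameter $d$ to obtain $E^{\mathrm{rem}}$, then compute the SCCs of the subgraph $(V, E \setminus E^{\mathrm{rem}})$ in parallel (this is a standard parallel primitive, available within work $W_{\geq 0}(n,m)$ and span $S_{\geq 0}(n,m)$ up to polylogarithmic overhead, e.g.\ via the reachability-based routines underlying \cite[Lemma 5.1]{AshvinkumarBCGHJNS24}), and finally apply the cited parallel topological-sort algorithm to the condensation (the DAG on the SCCs). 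Relabel the components as $S_1, \dots, S_k$ in topological order; this is the partition $\calS$ we output. The total work and span are those of \Cref{thm:dirldd} plus the SCC and topological-sort steps, all of which are $W_{\geq 0}(n,m) \cdot \poly(\log n)$ and $S_{\geq 0}(n,m) \cdot \poly(\log n)$, and we invoke the additivity assumption \eqref{eq:work-span-additive} only implicitly (here there is a single invocation, so this is immediate).

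For correctness of the first property, note that each $S_i$ is exactly an SCC of $E \setminus E^{\mathrm{rem}}$, so the first guarantee of \Cref{thm:dirldd} directly gives $\dist_G(u,v) \le d$ and $\dist_G(v,u) \le d$ for all $u, v \in S_i$; no work is needed beyond unwinding definitions. For the second property, the key observation is that if an edge $e = (u,v) \in E$ is \emph{cut} in the sense of the lemma — i.e.\ $u \in S_i$, $v \in S_j$ with $i > j$ — then $e$ cannot lie inside $E \setminus E^{\mathrm{rem}}$: were $e$ retained, $v$ would be reachable from $u$ in $E \setminus E^{\mathrm{rem}}$, so the SCC of $v$ would be at or after the SCC of $u$ in any topological order of the condensation, contradicting $i > j$. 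Hence ``$e$ is cut'' implies ``$e \in E^{\mathrm{rem}}$'', and therefore $\prob(e \text{ is cut}) \le \prob(e \in E^{\mathrm{rem}}) = O\!\left(\frac{w(e)\log^2 n}{d} + \frac{1}{n^8}\right)$, which is at most $Q(n) \cdot \frac{w(e)}{d} + O\!\left(\frac{1}{n^8}\right)$ for $Q(n) = O(\log^2 n)$, absorbing the constant. (Edges with both endpoints in the same $S_i$, or with $i < j$, are never cut, so they need no bound.)

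I do not anticipate a genuine obstacle here: the lemma is essentially a repackaging of \Cref{thm:dirldd} into a more convenient interface. The only point requiring a sentence of care is the implication ``cut $\Rightarrow$ removed'', which rests on the fact that a topological order of the condensation respects reachability — this is exactly where the topological sorting is used, and it is what lets us translate the per-edge removal probability into a per-edge cut probability without loss. A secondary bookkeeping point is to make sure the $\poly(\log n)$ overheads from the SCC computation and the topological sort compose correctly with those already present in \Cref{thm:dirldd}; since polylogarithmic factors are closed under addition and multiplication, this is routine. Everything else — that $\calS$ is indeed a partition of $V$, that the weak-diameter bound transfers verbatim — is immediate from the construction.
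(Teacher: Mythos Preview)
Your proposal is correct and follows exactly the approach the paper takes: the paper's entire argument is the one-sentence remark preceding the lemma, namely to combine \Cref{thm:dirldd} with the parallel SCC/topological-sort routine of \cite[Lemma~5.1]{AshvinkumarBCGHJNS24}. You have simply spelled out the details the paper leaves implicit, including the key observation that ``cut $\Rightarrow$ removed'' via the topological-order property of the condensation.
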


We will make the assumption that \cref{cor:topo-sorted-directed-ldd} does not cut any edge in the case where all pairs of nodes that can reach each other are within distance $d$ (i.e., in the trivial instances in which we only need to compute a topological ordering of the strongly connected components).

\subsection{Potential Adjustments}
A key tool used in the context of shortest paths are potential adjustments, originally introduced by Johnson~\cite{Johnson77}.

\begin{definition}[Potential-Adjusted Weights]
Let $G = (V, E, w)$ be a directed graph and let $\phi : V \rightarrow \mathbb{Z}$ be a potential function. The \textit{$\phi$-adjusted edge weights} $w_\phi$ are defined by $w_\phi(u, v) = w(u, v) + \phi(u) - \phi(v)$. We write $G_\phi = (V, E, w_\phi)$.
\end{definition}

It is easy to observe that potential adjustments preserve shortest paths (in the sense that~$P$ is a shortest $u$-$v$-path in $G$ if and only if $P$ is a shortest $u$-$v$-path in $G_\phi$ for all potentials $\phi$), and preserve the weight of cycles exactly. Another useful application is \emph{Johnson's trick:}

\begin{observation}[Johnson's Trick~\cite{Johnson77}]\label{obs:johnson}
Let $G=(V, E)$ be a directed graph with no negative cycles, and let $\phi$ be the potential defined by $\phi(v) = \dist(V, v)$. Then all edge weights in $G_\phi$ are nonnegative.
\end{observation}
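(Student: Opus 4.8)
The plan is to unfold the definition of the potential-adjusted weights and reduce the claim to the triangle inequality for shortest-path distances. First I would check that $\phi(v) = \dist(V, v)$ is a genuine (finite) potential function. Since $V$ contains every vertex, the empty walk at $v$ certifies $\dist(V, v) \le \dist(v, v)$, and $\dist(v, v) = 0$ because $G$ has no negative cycle; hence $\phi(v) \le 0$, so $\phi$ is bounded above. It is also bounded below, since in the absence of negative cycles every shortest walk can be taken to be a simple path, of which there are finitely many. Thus $\phi : V \to \mathbb{Z}$ is well-defined.

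Next, fix an arbitrary edge $(u, v) \in E$. By definition of the $\phi$-adjusted weights, $w_\phi(u, v) = w(u, v) + \phi(u) - \phi(v) = w(u, v) + \dist(V, u) - \dist(V, v)$, so it suffices to prove $\dist(V, v) \le \dist(V, u) + w(u, v)$. This is immediate: take a shortest walk from some vertex of $V$ to $u$, which has length $\dist(V, u)$, and append the edge $(u, v)$; this yields a walk from $V$ to $v$ of length $\dist(V, u) + w(u, v)$, and $\dist(V, v)$ is by definition the minimum length over all such walks. Rearranging gives $w_\phi(u, v) \ge 0$, and since the edge was arbitrary, all edge weights in $G_\phi$ are nonnegative.

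There is no substantial obstacle here; the proof is a single application of the triangle inequality. The only point that requires a moment of care is confirming that $\dist(V, \cdot)$ is finite on all of $V$ — this is precisely where the no-negative-cycle hypothesis is used (and it is also implicitly needed for the triangle inequality to be meaningful).
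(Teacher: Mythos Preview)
Your proof is correct and follows the standard triangle-inequality argument. The paper does not actually supply a proof of this observation; it is stated as a known fact attributed to Johnson and used without further justification, so there is nothing to compare against beyond noting that your argument is exactly the canonical one.
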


Johnson's trick is a tool to find a potential that makes all edges in the graph nonnegative; the remaining instance of Nonnegative-Weight SSSP can then be solved e.g.\ by Dijkstra's algorithm. In our case, however, we can only find a potential that makes sure that the shortest paths contain \emph{few} negative edges. In this case, we rely on the following layered construction (\cref{lem:dijkstra-potential}) to nevertheless reduce to the nonnegative case. Notably, this simple lemma replaces the necessity to invoke the interwoven ``Dijkstra-Bellman-Ford'' algorithm as in~\cite{BernsteinNW22,BringmannCF23}.

\begin{lemma}[SSSP with Few Negative Edges] \label{lem:dijkstra-potential}
Let $G = (V, E, w)$ be a directed graph without negative weight cycles, let $\phi : V \to \Int$ and $t \geq 0$. There in an algorithm $\textsc{FewNegSSSP}(G, \phi, t)$ that returns $\boldsymbol{d} : V \to \Int$ satisfying
\begin{itemize}
    \item $\dist(V, v) \leq \boldsymbol{d}(v)$, and
    \item $\boldsymbol{d}(v) \leq w(P)$ for every path $P$ ending in $v$ containing at most $t$ negative edges with respect to $w_\phi$ (i.e. $|\{e \in P : w_\phi(e) < 0\}| \leq t$).
\end{itemize}
The algorithm has work $\Order(W_{\geq 0}(nt, mt))$ and span $\Order(S_{\geq 0}(nt, mt))$ (assuming that both $w$ and~$\phi$ are polynomially bounded).
\end{lemma}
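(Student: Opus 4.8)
\emph{Proof plan.}\quad The idea is to reduce, with a single call to the black-box Nonnegative-Weight SSSP algorithm, to a \emph{layered graph} made of $t+1$ copies of $V$ in which traversing a $w_\phi$-negative edge is allowed only when it moves to the next copy. Since $w$ and $\phi$ are polynomially bounded, so is $w_\phi$; fix $N = \poly(n)$ with $N \ge \max_e |w_\phi(e)|$ and $M = \poly(n)$ with $M \ge \max_v \phi(v)$, and let $E^- = \{e \in E : w_\phi(e) < 0\}$. Assume $t \ge 1$ (for $t=0$ this is the degenerate subcase with no layer-advancing edges) and, as usual, $m \ge n$. Define $H$ on vertex set $\{s^*\} \cup (V \times \{0,\dots,t\})$ with: a source edge $s^* \to (u,0)$ of weight $M - \phi(u)$ for every $u \in V$; a within-layer edge $(u,i) \to (v,i)$ of weight $w_\phi(u,v)$ for every $(u,v) \in E \setminus E^-$ and every $i \in \{0,\dots,t\}$; and a layer-advancing edge $(u,i) \to (v,i+1)$ of weight $w_\phi(u,v)$ for every $(u,v) \in E^-$ and every $i \in \{0,\dots,t-1\}$. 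Then $H$ has $O(nt)$ vertices and $O(mt)$ edges.

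The graph $H$ is not yet nonnegative, as layer-advancing edges carry the negative weights $w_\phi(e)$; this is repaired by one global potential. Set $\psi(s^*) = tN$ and $\psi(x,i) = (t-i)N$. In $H_\psi$ every within-layer edge keeps its nonnegative weight $w_\phi(e)$, every layer-advancing edge gets weight $w_\phi(e) + N \ge 0$ (using $w_\phi(e) \ge -N$), and every source edge keeps its nonnegative weight $M - \phi(u)$. As $\psi$ is polynomially bounded, we may run Nonnegative-Weight SSSP on $H_\psi$ from $s^*$ and read off $\dist_{H_\psi}(s^*,\cdot)$ from the returned tree; by \eqref{eq:work-span-additive} this costs $O(W_{\ge 0}(nt,mt))$ work and $O(S_{\ge 0}(nt,mt))$ span, which dominates the $O(mt)$ cost of building $H_\psi$. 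Undoing the potential gives $\dist_H(s^*,(v,j)) = \dist_{H_\psi}(s^*,(v,j)) - jN$, and we output
\[
    \boldsymbol{d}(v) \;=\; \min_{0 \le j \le t} \Bigl( \dist_{H_\psi}(s^*,(v,j)) - jN \Bigr) \;-\; M + \phi(v).
\]
Because $(v,0)$ is always reachable from $s^*$ through the direct edge, $\boldsymbol{d}(v)$ is a finite integer.

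The correctness argument rests on one identity: any path $s^* \to (u,0) \to \cdots \to (v,j)$ in $H$ has weight $(M - \phi(u)) + w_\phi(P')$, where $P'$ is its underlying $u$-$v$ walk in $G$, and since the $\phi$-terms along $P'$ telescope, this equals $M + w(P') - \phi(v)$. For the bound $\dist_G(V,v) \le \boldsymbol{d}(v)$: a path realizing $\dist_H(s^*,(v,j))$ projects to a $u$-$v$ walk in $G$ of weight $\ge \dist_G(u,v) \ge \dist_G(V,v)$ — here we use that $G$ has no negative cycle, so the walk shortcuts to a path of no greater weight — and the identity then yields $\dist_H(s^*,(v,j)) \ge M + \dist_G(V,v) - \phi(v)$ for every $j$, i.e.\ $\boldsymbol d(v) \ge \dist_G(V,v)$. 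For the bound $\boldsymbol d(v) \le w(P)$: given a path $P$ from some $s_P$ to $v$ with $k \le t$ edges in $E^-$, lift it to the $s^*$-$(v,k)$-path in $H$ that begins $s^* \to (s_P,0)$ and advances a layer exactly at each $E^-$-edge of $P$; by the identity its weight is $M + w(P) - \phi(v)$, so $\dist_H(s^*,(v,k)) \le M + w(P) - \phi(v)$ and hence $\boldsymbol d(v) \le w(P)$.

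The construction is the only real content; the step I expect to be most error-prone is making the shifts consistent — one must take the direction $\psi(x,i) = (t-i)N$ (rather than $(i-t)N$) so that layer-advancing edges become $w_\phi(e) + N \ge 0$ instead of $w_\phi(e) - N$, while still having the output formula recover $w(P)$ exactly. It is also worth stating explicitly that the bound $\dist_G(V,v) \le \boldsymbol d(v)$ truly needs the no-negative-cycle hypothesis: otherwise a minimum-weight walk in $H$ could project to a negative closed walk in $G$, pushing $\boldsymbol d(v)$ below $\dist_G(V,v)$.
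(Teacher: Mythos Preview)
Your proof is correct and follows essentially the same layered-graph reduction as the paper. The paper's version differs only cosmetically: it bakes the large offset directly into the layer-advancing edge weights (rather than applying it afterward via your potential~$\psi$), lets \emph{every} edge---not just the $w_\phi$-negative ones---advance a layer, and correspondingly reads $\boldsymbol{d}(v)$ from layer~$t$ alone instead of taking your minimum over all layers.
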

\begin{proof}
Consider the $(t+1)$-layered graph $G' = (V', E', w')$ with vertices $V' = V \times \set{0, \dots, t}$ and edges defined as follows: For each original edge $(u, v) \in E$ and for each $0 \leq i \leq t$ we include an edge $((u, i), (v, i)) \in E'$ of weight $w_\phi(u, v)$ if this weight happens to be nonnegative. Moreover, for each $0 \leq i < t$ we include the edge $((u, i), (v, i+1)) \in E'$ of weight $w_\phi(u, v) + M$, where we choose a sufficiently large value $M := 2 \max_v |\phi(v)| + \max_v |w(v)|$ ensuring that the edge weight becomes nonnegative. Additionally, attach a source vertex $s$ to the graph~$G'$ that is connected to each node $(u, 0)$ via an edge of length $-\phi(u) + M$. All in all, the graph $G'$ contains only nonnegative edges, and thus we can afford to compute the distances $\dist_{G'}(s, v')$ for all $v' \in V'$ with work~$\Order(W_{\geq 0}(nt, mt))$ and span $\Order(S_{\geq 0}(nt, mt))$. Finally, we return $\boldsymbol{d}(v) = \dist_{G'}(s, (v, t)) - (t+1)M + \phi(v)$.

For the correctness, note that any $s$-$(v, t)$-path $P'$ in $G'$ traverses the layers~\makebox{$0, \dots, t$} (in this order). Inside every layer the path arbitrarily follows the edges from $G$, and between the layers the path takes exactly one edge from $G$. Formally, the path takes the form
\begin{equation*}
    P' = s \, (v_{0, 1}, 0) \, \dots \, (v_{0, \ell_0}, 0) \, (v_{1, 1}, 1) \, \dots \, (v_{t-1, \ell_{t-1}}, t-1) \, (v_{t, 1}, t) \, \dots \, (v_{t, \ell_t}, t) = (v, t).
\end{equation*}
But by construction, $P = v_{0, 1} \, \dots \, v_{0, \ell_0} \, v_{1, 1} \, \dots \, v_{t-1, \ell_{t-1}} \, v_{t, 1} \, \dots \, v_{t, \ell_t}$ is a path in the original graph~$G$, and it holds that $w'(P') = w(P) + (t+1) M - \phi(v)$ (since the contribution of the potentials telescope, and since $P'$ contains exactly $t+1$ edges that incur an overhead of $+M$). It follows immediately that $\boldsymbol{d}(v) \geq \dist_G(V, v) + (t+1) M - \phi(v) - (t+1) M + \phi(v) = \dist_G(V, v)$ proving the first item. For the second item, observe that any path $P$ to $v$ that contains at most~$t$ negative edges with respect to $w_\phi$ can be simulated as an $s$-$(v, t)$-path in $G'$ by traversing the~$t$ negative edges \emph{between} layers.
\end{proof}

\subsection{Reduction to Restricted Graphs}\label{sec:restricted-graphs}
Consider the following \emph{restricted} variant of SSSP:

\begin{definition} \label{def:restricted-graph}
An edge-weighted directed graph $G=(V, E, w)$ is \emph{restricted} if it satisfies the following two conditions:
\begin{enumerate}[label=(\roman*), topsep=\smallskipamount]
    \item all edge weights are at least $-1$ and at most $n$ (i.e., $w(e) \in \set{-1, 0, \dots, n}$), and
    \item the minimum cycle mean is at least 1 (i.e. for every directed cycle $C$, $\sum_{e \in C} w(e) \geq |C|$).
\end{enumerate}
\end{definition}

\begin{definition}[Restricted SSSP] \label{def:restricted-sssp}
Given a restricted graph $G = (V, E, w)$, the goal is to compute a potential $\phi : V \to \Int$ such that $w_\phi(e) \geq 0$ for all $e \in E$.
\end{definition}

In particular, note that restricted graphs cannot contain negative cycles, and thus in the Restricted SSSP problem all distances are well-defined. Already in~\cite{BernsteinNW22} a scaling technique was formalized to reduce from the general SSSP problem to Restricted SSSP; this argument was later made explicit and refined in~\cite{BringmannCF23}. Phrased in terms of parallel algorithms, the reduction can be summarized as follows:\footnote{Specifically, \cite[Section 5]{BringmannCF23} outlines a reduction from general SSSP to Restricted SSSP which calls the Restricted SSSP oracle only a polylogarithmic number of times. It is trivially to parallelize this reduction simply by making these calls sequential (at the cost of increasing work and span by a polylogarithmic factor). In~\cite{BringmannCF23}, however, the definition of \emph{restricted} differs slightly from ours in that it does not impose the upper bound $w(e) \leq n$. To enforce this additional constraint, we simply remove all edges with weight more than $n$ in the constructed graph $G$; let~$G'$ denote the remaining graph (which is restricted according to \cref{def:restricted-graph}). Solving Restricted SSSP on $G'$ leads to a potential that makes all edges in $G'$ nonnegative, and thus by one additional call to Nonnegative-Weight SSSP we can compute the shortest $V$-$v$-paths in $G'$. But we have that $\dist_G(V, v) = \dist_{G'}(V, v)$ as no edge with weight more than $n$ can participate in a shortest $V$-$v$-path (given that all edges have weight at least $-1$).}

\begin{lemma}[\cite{BernsteinNW22,BringmannCF23}] \label{lem:restricted-suffices}
Suppose that there is a Monte Carlo algorithm for Restricted SSSP with work $W_{\mathrm{R}}(n, m)$ and span $S_{\mathrm{R}}(n, m)$. Then there is a Las Vegas algorithm for SSSP with work $(W_{\mathrm{R}}(n, m) + W_{\geq 0}(n, m)) \cdot \poly(\log n)$ and span $(S_{\mathrm{R}}(n, m) + S_{\geq 0}(n, m)) \cdot \poly(\log n)$.
\end{lemma}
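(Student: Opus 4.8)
Here is how I would prove \Cref{lem:restricted-suffices}.

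This statement is, at heart, a parallelization of the sequential scaling reduction of~\cite{BernsteinNW22} (as streamlined in~\cite{BringmannCF23}), together with a Monte-Carlo-to-Las-Vegas wrapper. Recall the shape of that reduction. By Goldberg-style scaling one processes the bits of the (polynomially bounded) edge weights from the most to the least significant: over $\poly(\log n)$ rounds one maintains a potential $\phi$, and in each round the potential-adjusted graph $G_\phi$ becomes, after a suitable scaling and shift, \emph{restricted} in the sense of \Cref{def:restricted-graph}, so that a single call to the Restricted SSSP oracle yields an improved potential; once all rounds are done, $G_\phi$ has only nonnegative edges. Thus the reduction consists of $\poly(\log n)$ invocations of the Restricted SSSP oracle interleaved with $\poly(\log n)$ elementary steps: scaling and rounding weights, composing potentials, deleting the edges of weight exceeding $n$ so as to meet clause~(i) of \Cref{def:restricted-graph} as explained in the footnote to \Cref{lem:restricted-suffices}, and, at the end, extracting a shortest-path tree from the feasible potential $\phi$ by one Nonnegative-Weight SSSP computation on $G_\phi$ (recall that potential adjustments preserve shortest paths). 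Each such step runs in $\tilde{O}(m)$ work and $\poly(\log n)$ span on a PRAM, and if the input contains a negative cycle the reduction of~\cite{BringmannCF23} reports one instead, again using only operations of this kind.

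The plan is to run these $\poly(\log n)$ steps one after another. For the work, the assumed additivity of $W_{\geq 0}$ (\Cref{eq:work-span-additive}) gives that one step costs at most $W_{\mathrm{R}}(n,m) + W_{\geq 0}(n,m) + \tilde{O}(m) = \Order(W_{\mathrm{R}}(n,m) + W_{\geq 0}(n,m))$ (using $W_{\geq 0}(n,m) = \Omega(m)$), and summing over $\poly(\log n)$ steps yields work $(W_{\mathrm{R}}(n,m) + W_{\geq 0}(n,m)) \cdot \poly(\log n)$. For the span, one step has span $S_{\mathrm{R}}(n,m) + S_{\geq 0}(n,m) + \poly(\log n)$, and chaining $\poly(\log n)$ steps sequentially multiplies this by $\poly(\log n)$, giving span $(S_{\mathrm{R}}(n,m) + S_{\geq 0}(n,m)) \cdot \poly(\log n)$. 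Since each of the $\poly(\log n)$ oracle calls is correct with high probability, a union bound shows that the whole reduction is correct with high probability.

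To upgrade this Monte Carlo guarantee to a Las Vegas one, I would wrap everything in a verify-and-retry loop. The reduction returns either a potential $\phi$ together with a candidate shortest-path tree $T$ (computed from $\phi$ via the final Nonnegative-Weight SSSP call) or a candidate negative cycle $C$; both are checkable deterministically in $\tilde{O}(m)$ work and $\poly(\log n)$ span. For $\phi$ we verify that $w_\phi(e) \geq 0$ for every $e \in E$: if so, $G_\phi$ has only nonnegative weights, hence $G$ has no negative cycle and the tree returned by a correct Nonnegative-Weight SSSP algorithm on $G_\phi$ is a genuine shortest-path tree of $G$, which we output. For $C$ we verify that it is a directed cycle with $\sum_{e \in C} w(e) < 0$, and output it if so. If the check fails we rerun the reduction with fresh randomness. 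Since one run succeeds with high probability, retrying until success gives a Las Vegas algorithm whose expected work and span are only a constant factor larger than a single run's (and whose work and span are within an $\Order(\log n)$ factor with high probability), and a passed check never certifies a wrong answer; this stays within the claimed bounds.

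The only point that genuinely requires care is not the parallelization itself — which amounts to ``run $\poly(\log n)$ oracle calls and auxiliary steps sequentially'' — but confirming that the scaling reduction of~\cite{BernsteinNW22,BringmannCF23} really does decompose into this black-box template in the PRAM model: that every non-oracle step (the per-round bookkeeping, the weight-scaling that enforces clause~(ii) of \Cref{def:restricted-graph}, the negative-cycle extraction, and the final tree extraction, which is exactly one Nonnegative-Weight SSSP call on $G_\phi$) runs in $\tilde{O}(m)$ work and $\poly(\log n)$ span, and that deleting the weight-$>n$ edges is harmless because, all weights being at least $-1$, no such edge lies on a shortest $V$-$v$-path. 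These facts are already explicit in~\cite{BringmannCF23}, so beyond the accounting above nothing new is needed.
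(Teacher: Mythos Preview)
Your proposal is correct and follows essentially the same approach as the paper, which defers to the scaling reduction of~\cite{BringmannCF23} and notes (in a footnote) that parallelization is trivial because the reduction makes only $\poly(\log n)$ sequential oracle calls, with the extra $w(e)\le n$ constraint handled by edge deletion exactly as you describe. Your explicit verify-and-retry wrapper for the Las Vegas guarantee is the standard mechanism implicit in the cited works.
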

\section{A Bottom-Up Algorithm for Negative-Weight SSSP}\label{sec:main-result}
In this section we give an efficient parallel algorithm for Restricted SSSP; see the pseudocode \textsc{RestrictedSSSP} in \cref{alg:restricted-sssp}. Our goal is to establish the following \cref{thm:restricted-sssp}. Combined with \Cref{lem:restricted-suffices}, this establishes our Main \Cref{thm:parallel-nsssp-reduction}.

\begin{restatable}{theorem}{meancycleworks}\label{thm:restricted-sssp}
    For a restricted graph $G = (V, E, w)$, the algorithm $\textsc{RestrictedSSSP}(G)$ computes a potential $\phi : V \rightarrow \mathbb{Z}$ such that, with high probability, $w_\phi(e) \geq 0$ for all edges $e \in E$. The algorithm has work $W_{\geq 0}(n, m) \cdot \poly(\log n)$ and span $S_{\geq 0}(n, m) \cdot \poly(\log n)$.
\end{restatable}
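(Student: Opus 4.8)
The overall plan is to reduce the problem to repeatedly computing feasible potentials on small pieces of $G$ and assembling them \emph{bottom-up} along a hierarchy of directed low-diameter decompositions. The core subroutine is this: given an induced subgraph $G[B]$ (which, like $G$, has no negative cycle) and a potential $\phi$ on it under which \emph{some} shortest $B$-$v$-path uses at most $t:=\poly(\log n)$ edges that are negative with respect to $w_\phi$, for every $v\in B$, compute a feasible potential for $G[B]$ (a $\psi$ with $w_\psi(e)\ge 0$ for all edges of $G[B]$). We do this by calling $\textsc{FewNegSSSP}(G[B],\phi,t)$: by \Cref{lem:dijkstra-potential} the returned $\boldsymbol{d}$ satisfies $\dist_{G[B]}(B,v)\le\boldsymbol{d}(v)\le w(P)$ for every such $t$-negative path $P$ ending in $v$, so the hypothesis forces $\boldsymbol{d}(v)=\dist_{G[B]}(B,v)$, whence Johnson's trick (\Cref{obs:johnson}) shows $\boldsymbol{d}$ is feasible for $G[B]$; it is also polynomially bounded (it lies in $[-n,0]$). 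The full algorithm invokes this subroutine $O(\log n)$ times, each call costing $O(W_{\geq 0}(|B|\,t,|E(G[B])|\,t))$ work and $O(S_{\geq 0}(|B|\,t,|E(G[B])|\,t))$ span.

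To organize the $O(\log n)$ calls, I would iterate $\textsc{DirLDD}$ (\Cref{cor:topo-sorted-directed-ldd}) to obtain nested partitions $\mathcal P_0=\{V\}\succeq\mathcal P_1\succeq\cdots\succeq\mathcal P_L$ with $L=O(\log n)$, where $\mathcal P_j$ refines $\mathcal P_{j-1}$ by separately decomposing each block $B\in\mathcal P_{j-1}$ (with the edge weights of $G[B]$ clamped to be nonnegative) with a diameter parameter $d_j$; the $d_j$ decrease geometrically from a bound on the weak diameter of $G$ down to a constant, so each block of $\mathcal P_j$ has weak diameter at most $d_j$ in $G$ and the sub-blocks of a fixed parent block are topologically sorted. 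The algorithm then sweeps the hierarchy from the leaves upward: blocks of $\mathcal P_L$ are trivial (by the assumption that \textsc{DirLDD} cuts nothing inside a block of small weak diameter), and for $j=L$ down to $1$, and each $B\in\mathcal P_{j-1}$ in parallel, it combines the already-known feasible potentials of the sub-blocks $S_1,\dots,S_k$ of $B$ into a feasible potential for $B$: add to the potential of $S_i$ a polynomially bounded offset $o_i$ with $o_1>\cdots>o_k$, chosen large enough that within-sub-block and \emph{forward} between-sub-block edges become nonnegative under the resulting potential $\phi_B$. Only the ``backward'' edges cut by the level-$j$ decomposition then remain negative, so applying the subroutine above to $(G[B],\phi_B)$ with $t=\poly(\log n)$ yields a feasible potential for $B$. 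After the $j=1$ step we have a feasible potential for $\mathcal P_0=V$ (in fact $\dist_G(V,\cdot)$), which is the output. Recomputing a bounded feasible potential at every level keeps all potentials polynomially bounded across the $L$ levels, and \Cref{lem:dijkstra-potential} here plays the role of the interwoven Dijkstra--Bellman--Ford routine of~\cite{BernsteinNW22,BringmannCF23}.

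The main obstacle --- and the only place the restricted hypothesis (minimum cycle mean $\ge 1$, weights in $\set{-1,\dots,n}$) is used --- is to show that each combine step produces a $\phi_B$ under which some shortest $B$-$v$-path has at most $t=\poly(\log n)$ negative edges. The negative edges of $(G[B])_{\phi_B}$ are exactly the edges cut by the level-$j$ decomposition; by \Cref{cor:topo-sorted-directed-ldd} an edge $e$ is cut with probability $O(Q(n)\max(w(e),0)/d_j+n^{-8})$, so with high probability no originally negative edge is cut (it has clamped weight $0$), and on a fixed path the expected number of cut edges is $O(Q(n)/d_j)$ times the path's total nonnegative weight, plus a negligible term. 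The minimum-cycle-mean condition upgrades this to a $\poly(\log n)$ bound: a shortest $B$-$v$-path has nonpositive weight, and concatenating it with a shortest return path in $G$ (of weight at most $d_{j-1}$ since $B$ has weak diameter $\le d_{j-1}$) forms a cycle of mean $\ge 1$, which forces the path to have $O(d_{j-1})$ edges and hence $O(d_{j-1})$ total nonnegative weight; as $d_j=\Theta(d_{j-1})$, the expected number of cut edges along it is $O(Q(n))=O(\log^2 n)$. A union bound over the $n$ vertices and the $L$ levels, using the concentration of this count along a fixed shortest-path-tree path (a standard property of the ball-growing decomposition underlying \Cref{thm:dirldd}), then gives $t=\poly(\log n)$ with high probability. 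I expect this step --- isolating the correct ``a shortest path inside a low-diameter block has bounded positive weight'' consequence of the minimum-cycle-mean condition and matching it against the cut probability --- to be the technical heart; the rest (feasibility and boundedness of the offset-shifted potentials, the fact that after deleting cut edges the sub-blocks form a topologically sorted DAG, and the base case) is routine.

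For the resource bounds, building the hierarchy is $L=O(\log n)$ rounds, the $j$-th a collection of $\textsc{DirLDD}$ calls on the pairwise disjoint blocks of $\mathcal P_{j-1}$, costing $W_{\geq 0}(n,m)\cdot\poly(\log n)$ work and $S_{\geq 0}(n,m)\cdot\poly(\log n)$ span in total by \Cref{cor:topo-sorted-directed-ldd} and~\eqref{eq:work-span-additive}. The combine step at level $j$ is likewise a collection of subroutine calls on disjoint subgraphs with $t=\poly(\log n)$, which by \Cref{lem:dijkstra-potential} and~\eqref{eq:work-span-additive} totals $W_{\geq 0}(nt,mt)\le W_{\geq 0}(n,m)\cdot\poly(\log n)$ work and $S_{\geq 0}(nt,mt)\le S_{\geq 0}(n,m)\cdot\poly(\log n)$ span. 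Summing over the $L$ levels preserves these bounds, establishing the theorem; combined with \Cref{lem:restricted-suffices} this also yields \Cref{thm:parallel-nsssp-reduction}.
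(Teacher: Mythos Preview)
Your plan is close in spirit to the paper's---same use of the restricted hypothesis via the ``shortest paths inside a low-diameter piece have bounded nonnegative weight'' argument, same role for \textsc{FewNegSSSP}, same offset trick to glue sub-block potentials---but there is one genuine gap.

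The gap is your appeal to concentration. You assert that the number of edges cut along a fixed shortest-path-tree path concentrates around its $O(Q(n))$ mean, calling this ``a standard property of the ball-growing decomposition underlying \Cref{thm:dirldd}.'' But \Cref{thm:dirldd} and \Cref{cor:topo-sorted-directed-ldd} only bound the \emph{marginal} probability that a single edge is cut; no independence or tail bound across edges is claimed, and for directed LDDs such concentration is not an established property. Without it, Markov's inequality gives only constant success probability per vertex, and the union bound over $n$ vertices and $L$ levels fails.

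The paper sidesteps this entirely. Rather than one decomposition per scale, it computes $R=\Theta(\log n)$ \emph{independent} LDDs $\mathcal S_{\ell,1},\dots,\mathcal S_{\ell,R}$ at each level $\ell$ (all on the full graph $G_{\geq 0}$, \emph{not} nested inside the previous level's clusters), producing a \emph{set} $\Phi_\ell=\{\phi_{\ell,1},\dots,\phi_{\ell,R}\}$ of candidate potentials. When processing a cluster $S_i$ at level $\ell$, it calls \textsc{FewNegSSSP} once for \emph{each} $\phi\in\Phi_{\ell-1}$ and sets $\phi^{(i)}(v)=\min_{\phi\in\Phi_{\ell-1}}\boldsymbol d_\phi(v)$. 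Markov then says that for a fixed shortest $S_i$-$v$-path $P$ and a single repetition $r'$, the decomposition $\mathcal S_{\ell-1,r'}$ cuts at most $t$ edges of $P$ with probability at least $\tfrac12$; independence across $r'\in[R]$ boosts this to high probability, after which the union bound over all $\ell,r,v$ succeeds. This ``repetitions plus pointwise minimum'' device is precisely the ingredient your plan lacks. Note that it relies on the level-$(\ell{-}1)$ decompositions being sampled independently of the level-$\ell$ cluster $S_i$ (and hence of $P$), which is automatic when every LDD is taken on the whole of $G_{\geq 0}$ but would not hold in your nested scheme; grafting $R$ repetitions onto a nested hierarchy naively branches to $R^L$ subproblems, so a fix would likely push you toward the paper's non-nested organization.
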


\begin{algorithm}[t]
    \caption{A parallel algorithm for the Restricted SSSP problem; see \cref{thm:restricted-sssp}.} \label{alg:restricted-sssp}
    \begin{algorithmic}[1]
        \Function{RestrictedSSSP}{$G$}
            \State Let $L := \ceil{\log(2n^2)}$
            \State Let $R := (2 + c_{\mathrm{whp}}) \log n$ \Comment{$c_{\mathrm{whp}}$ controls probability of success}
            \State Let $t := 4\lddqual(n) + 1$ \Comment{$Q(n)$ is the quality of \Cref{cor:topo-sorted-directed-ldd}}
            \State Let $\Phi_0 \gets \set{0}$ be the singleton set consisting of the all-zero potential
            \For{$\ell \gets 1, \dots, L$}
                \For{$r \gets 1, \dots, R$}
                    \State Let $\mathcal S_{\ell, r} = (S_1, \dots, S_k) \gets \Call{DirLDD}{G_{\geq 0}, 2^\ell}$
                    \ForParallel{$i \in [k]$}
                        \For{$\phi \in \Phi_{\ell-1}$}
                            \State Let $\boldsymbol{d}_\phi \gets \Call{FewNegSSSP}{G[S_i], \phi, t}$
                        \EndFor
                        \State Let $\phi^{(i)}$ be the potential defined $\phi^{(i)}(v) = \min_{\phi \in \Phi_{\ell-1}} \boldsymbol{d}_\phi(v)$
                    \EndForParallel
                    \State Let $\phi_{\ell, r}$ be the potential defined by~\smash{$\phi_{\ell, r}(v) = \phi^{(i)}(v) - i \cdot n$} whenever $v \in S_i$
                \EndFor
                \State Let $\Phi_\ell \gets \set{\phi_{\ell, r} : r \in [R]}$
            \EndFor
            \State \Return $\phi_{L, 1}$
        \EndFunction
    \end{algorithmic}
\end{algorithm}

\subsection{Overview}
We start with a brief intuitive description of the algorithm. Throughout let $G_{\geq 0} = (V, E, w_{\geq 0})$ be the graph $G$ where we replace all negative edge weights by zero (i.e. $w_{\geq 0}(e) = \max(0, w(e))$). The algorithm proceeds in a logarithmic number of \emph{levels} $\ell \gets 1, \dots, L$ \emph{from bottom to top}. At each level $\ell$ we further run a logarithmic number of \emph{repetitions} $r \gets 1, \dots, R$, and for each repetition we compute a low-diameter decomposition $\mathcal S_{\ell, r}$ in the graph $G_{\geq 0}$ with diameter parameter $d = 2^\ell$. Our goal is to find a potential function $\phi_{\ell, r}$ that reweights the graph in such a way that all edges not cut by~$\mathcal S_{\ell, r}$ become nonnegative. At the top-level $L$ we do not cut any edges, and thus the function~$\phi_{L, 1}$ (say) makes all edges nonnegative. To construct the potential~$\phi_{\ell, r}$ we follow these two steps:
\begin{enumerate}
    \setlength\parindent{1.6em}
    \item We first consider each cluster $S_i$ in the LDD $\mathcal S_{\ell, r}$ individually. Our goal is to compute a potential function $\phi^{(i)}$ that makes all edges \emph{inside} $S_i$ nonnegative. Inspired by Johnson's trick, our goal is to assign $\phi^{(i)}(v) = \dist_{G[S_i]}(S_i, v)$, i.e., to compute the shortest $S_i$-$v$-paths for all $v \in S_i$.
    
    Fix a shortest $S_i$-$v$-path $P$ in $G[S_i]$. A key insight of Bernstein, Nanongkai and Wulff-Nilsen is that since $G$ is restricted and since $S_i$ has weak diameter at most $d = 2^\ell$ in~$G_{\geq 0}$, the path $P$ has small length even in the graph $G_{\geq 0}$; specifically,~\makebox{$w_{\geq 0}(P) \leq d$} (see \cref{lem:rep-paths-short}). Therefore, with good probability the path $P$ has been cut only a few times (say, $t = \poly(\log n)$ times) on the \emph{previous level $\ell-1$}. Therefore, on the previous level we have already computed a potential function $\phi_{\ell-1, r'}$ (in some repetition $r'$) which makes all but at most~$t$ edges in~$P$ nonnegative. We can thus compute $w(P) = \dist_{G[S_i]}(S_i, v)$ by calling $\Call{FewNegSSSP}{G[S_i], \phi_{\ell-1, r'}, t}$ (see \cref{lem:dijkstra-potential}).

    \item In a second step, we combine the potential functions~\smash{$\phi^{(i)}$} of the individual clusters to one potential function $\phi_{\ell, r}$. The idea is simple: We take~\smash{$\phi_{\ell, r}(v) = \phi^{(i)}(v) - i \cdot M$} whenever~\makebox{$v \in S_i$}, for some sufficiently large number $M$. The potentials $\phi^{(i)}$ ensure that edges within the same cluster are nonnegative, and any edge from a lower-index cluster to a higher-index cluster is nonnegative provided that $-M < \phi^{(i)}(v) \leq 0$ (it turns out that taking $M = n$ suffices). 
\end{enumerate}

\subsection{Formal Analysis}
In this section we provide a formal analysis of \cref{alg:restricted-sssp}. Our strategy is as follows: We first recap one of the key insights of~\cite{BernsteinNW22} phrased in our language (\cref{lem:rep-paths-short}). We then define a certain \emph{success event} (\cref{def:success-event}), and we prove that this event happens with high probability (\cref{lem:success-event}). Conditioned on this success event we can then prove by induction from bottom to top that the algorithm is correct (\cref{lem:restricted-sssp-correctness}), and efficient (\cref{lem:restricted-sssp-work}).

\begin{lemma} \label{lem:rep-paths-short}
Let $G = (V, E, w)$ be a restricted graph and let $S \subseteq V$ be such that $S$ has weak diameter $d$ in $G_{\geq 0}$. Then any shortest $S$-$v$-path $P$ in $G[S]$ has length $|P| \leq d$ and weight $w_{\geq 0}(P) \leq d$.
\end{lemma}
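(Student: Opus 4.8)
The plan is to exploit the minimum-cycle-mean condition of restricted graphs together with the weak-diameter hypothesis, by closing up $P$ into a cyclic walk. Write $P$ as a path from some $u \in S$ to $v$; since $P$ lies entirely in $G[S]$ we also have $v \in S$. First I would record two elementary facts. (a) Since $v \in S$, the trivial empty path is itself an $S$-$v$-path of weight $0$, so minimality of $P$ gives $w(P) \le 0$. (b) Since every edge of a restricted graph has weight at least $-1$, we have $w_{\geq 0}(e) = \max(0, w(e)) \le w(e) + 1$ edge by edge, hence $w_{\geq 0}(P) \le w(P) + |P|$. Thus it suffices to prove $|P| \le d$, and then $w_{\geq 0}(P) \le w(P) + |P| \le 0 + d = d$ follows from (a) and (b).

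To bound $|P|$, invoke the weak-diameter assumption: as $u, v \in S$, there is a $v$-$u$-path $Q$ in $G_{\geq 0}$ with $w_{\geq 0}(Q) \le \dist_{G_{\geq 0}}(v, u) \le d$, and consequently $w(Q) \le w_{\geq 0}(Q) \le d$ because $w(e) \le w_{\geq 0}(e)$ for every edge. Now concatenate: since $G_{\geq 0}$ has the same underlying edge set as $G$, the walk $P \cdot Q$ is a closed walk in $G$, and it therefore decomposes into a multiset of simple cycles of $G$. Applying condition (ii) of \cref{def:restricted-graph} to each of these cycles and summing yields $w(P) + w(Q) = w(P \cdot Q) \ge |P \cdot Q| = |P| + |Q| \ge |P|$. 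Rearranging and using fact (a) together with $w(Q) \le d$ gives $|P| \le w(P) + w(Q) \le 0 + d = d$, as desired.

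I do not anticipate a genuine obstacle here; the only points needing a word of care are that $G_{\geq 0}$ and $G$ share the same edge set (so the concatenation $P \cdot Q$ is a legitimate closed walk in $G$ on which the cycle-mean bound applies), the edge-by-edge inequalities relating $w$ and $w_{\geq 0}$ under the restricted constraint $w(e) \ge -1$, and the degenerate case $u = v$, in which $P$ is the empty path and both bounds hold trivially. The crux of the argument — closing $P$ with the short return path $Q$ guaranteed by the weak diameter of $S$ in $G_{\geq 0}$, and charging its length against the minimum-cycle-mean condition — is exactly the insight of Bernstein, Nanongkai and Wulff-Nilsen that this lemma is meant to encapsulate.
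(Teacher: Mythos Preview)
Your argument is correct and follows essentially the same route as the paper: close $P$ with a return path $Q$ of $w_{\geq 0}$-length at most $d$ guaranteed by the weak-diameter hypothesis, apply the minimum-cycle-mean bound to the resulting closed walk to get $|P|\le w(P)+w(Q)\le d$, and combine with $w_{\geq 0}(P)\le w(P)+|P|$ and $w(P)\le 0$. You are in fact slightly more careful than the paper in two places---you take $Q$ in $G_{\geq 0}$ (rather than in $G[S]$, which is what the weak-diameter assumption actually controls) and you explicitly decompose the closed walk into simple cycles before invoking condition~(ii)---but the core idea is identical.
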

\begin{proof}
Let $u \in S$ denote the starting vertex of~$P$, let $Q$ denote a shortest $v$-$u$-path in $G[S]$ and let $C$ denote the cycle obtained by concatenating~$P$ and $Q$. On the one hand, we have that
\begin{equation*}
    w(C) = w(P) + w(Q) \leq w(Q) \leq w_{\geq 0}(Q) \leq \dist_{G_{\geq 0}}(v, u) \leq d,
\end{equation*}
where the first inequality is due to~\smash{$w(P) \leq \dist_{G[S]}(S, v) \leq 0$} and the last inequality is due to the fact that $S$ has weak diameter at most $d$ in $G_{\geq 0}$. On the other hand, since~$G$ is a restricted graph, we have that $|C| \leq w(C)$. Putting both together, we obtain that
\begin{equation*}
    w_{\geq 0}(P) \leq w(P) + |P| \leq |P| \leq |C| \leq w(C) \leq d,
\end{equation*}
proving the statement.
\end{proof}

\begin{definition}[Success Event] \label{def:success-event}
The algorithm is \emph{successful} if for all levels $1 < \ell \leq L$, for all $r \in [R]$ and for all vertices $v \in V$ the following holds: Let $S \in \mathcal S_{\ell, r}$ be the unique cluster containing $v$. Then there is a shortest $S$-$v$-path $P$ in $G[S]$ (i.e., $w(P) = \dist_{G[S]}(S, v)$) and a repetition $r' \in [R]$ such that $\mathcal S_{\ell-1, r'}$ cuts at most $t$ edges from $P$.
\end{definition}

\begin{lemma}[Success Event] \label{lem:success-event}
With high probability, \cref{alg:restricted-sssp} is successful.
\end{lemma}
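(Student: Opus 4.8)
The plan is a union bound over all triples $(\ell, r, v)$ with $1 < \ell \leq L$, $r \in [R]$, $v \in V$, where for each fixed triple the success probability is boosted from a constant to $1 - n^{-\Omega(1)}$ using the $R$ independent repetitions at level $\ell-1$. The starting observation is that every invocation of $\textsc{DirLDD}$ in \cref{alg:restricted-sssp} uses fresh independent randomness, so the decompositions $\mathcal S_{\ell, r}$ over all $\ell \in [L], r \in [R]$ are mutually independent.

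Fix $1 < \ell \leq L$, $r \in [R]$, $v \in V$. I would first expose the outcome of $\mathcal S_{\ell, r}$; this determines the cluster $S \ni v$, and I then fix, as a deterministic function of $S$, a shortest $S$-$v$-path $P = P(S)$ in $G[S]$ (such a path exists and has finite length since $G$, hence $G[S]$, is restricted and thus contains no negative cycle). By \cref{cor:topo-sorted-directed-ldd}, $S$ has weak diameter at most $2^\ell$ in $G_{\geq 0}$, so \cref{lem:rep-paths-short} yields $|P| \leq 2^\ell$ and $w_{\geq 0}(P) \leq 2^\ell$. Now, for a single repetition $r' \in [R]$, let $X_{r'}$ count the edges of $P$ that $\mathcal S_{\ell-1, r'} = \textsc{DirLDD}(G_{\geq 0}, 2^{\ell-1})$ cuts. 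By linearity of expectation and the cut-probability bound of \cref{cor:topo-sorted-directed-ldd},
\[
    \mathbb{E}[X_{r'}] \;\leq\; \lddqual(n)\cdot\frac{w_{\geq 0}(P)}{2^{\ell-1}} + |P|\cdot\bigO(1/n^8) \;\leq\; 2\lddqual(n) + \bigO(1/n^6) \;\leq\; \frac{t+1}{2},
\]
using $2^\ell \leq 2^L = \bigO(n^2)$ and the choice $t = 4\lddqual(n)+1$ (the last inequality holding for $n$ large enough). Since $X_{r'}$ is a nonnegative integer, Markov's inequality gives $\prob(X_{r'} > t) = \prob(X_{r'} \geq t+1) \leq \mathbb{E}[X_{r'}]/(t+1) \leq 1/2$.

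The key point is that, conditioned on $\mathcal S_{\ell, r}$, the decompositions $\mathcal S_{\ell-1, 1}, \dots, \mathcal S_{\ell-1, R}$ remain mutually independent and each still satisfies the marginal bound above, while the path $P$ has been frozen. Hence the probability that $X_{r'} > t$ holds for every $r' \in [R]$ is at most $2^{-R}$, and since this bound is uniform over the exposed outcome of $\mathcal S_{\ell, r}$, it holds unconditionally for the triple $(\ell, r, v)$; with $R = (2+c_{\mathrm{whp}})\log n$ this is $n^{-(2+c_{\mathrm{whp}})}$. A union bound over the at most $L\cdot R\cdot n = \bigO(n\log^2 n)$ triples $(\ell, r, v)$ then shows that \cref{alg:restricted-sssp} is unsuccessful with probability at most $\bigO(n\log^2 n)\cdot n^{-(2+c_{\mathrm{whp}})} \leq n^{-c_{\mathrm{whp}}}$ for $n$ large, proving the claim since $c_{\mathrm{whp}}$ is an arbitrary prespecified constant.

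I expect the only delicate point to be the order of conditioning: the path $P$ that must witness the success event at level $\ell$ genuinely depends on the randomness used to build $\mathcal S_{\ell, r}$, so one must first fix $\mathcal S_{\ell, r}$ (hence $S$ and $P$) and only then invoke the guarantees of the level-$(\ell-1)$ decompositions — this is precisely where the mutual independence of all $\textsc{DirLDD}$ calls is used. Everything else is the routine "Markov gives constant success probability, independent repetitions amplify to high probability, union bound over polynomially many events" pattern, together with the bookkeeping that the constants $t$ and $R$ set in the algorithm are exactly what make these three steps fit.
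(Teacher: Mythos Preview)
Your proposal is correct and follows essentially the same approach as the paper's proof: fix $(\ell,r,v)$ and the resulting cluster $S$ and shortest path $P$, bound the expected number of cuts by $\mathcal S_{\ell-1,r'}$ via \cref{lem:rep-paths-short} and the cut-probability guarantee of \cref{cor:topo-sorted-directed-ldd}, apply Markov to get constant success probability, amplify via the $R$ independent repetitions, and union-bound over all triples. Your handling of the conditioning (exposing $\mathcal S_{\ell,r}$ before invoking the level-$(\ell-1)$ guarantees) is in fact more explicit than the paper's, which glosses over this point with the parenthetical ``recall throughout that $\mathcal S_{\ell-1,r'}$ is sampled independently of $\mathcal S_{\ell,r}$''.
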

\begin{proof}
Fix any level $\ell > 1$, repetition $r$, and vertex $v$, and let $S \in \mathcal S_{\ell, r}$ be the unique cluster containing $v$. Moreover, fix an arbitrary shortest $S$-$v$-path $P$ in $G[S]$. We show that for any~\makebox{$r' \in [R]$}, with probability at least $\frac{1}{2}$ the path $P$ is cut at most $t$ times in~$\mathcal S_{\ell-1, r'}$ (recall throughout that $S_{\ell-1, r'}$ is sampled independently of $\mathcal S_{\ell, r}$).

The cluster $S$ has weak diameter at most $2^\ell$ in $G_{\geq 0}$ (by \cref{cor:topo-sorted-directed-ldd}). Thus, by the previous lemma we have $w_{\geq 0}(P) \leq 2^\ell$. Since the LDD cuts each edge $e$ with probability at most~\smash{$\frac{w(e)}{d} \cdot Q(n) + \bigO(\frac{1}{n^8})$}, the expected number of edges on $P$ that is cut in $\mathcal S_{\ell-1, r'}$ is at most
\begin{equation*}
    \sum_{e \in P} \parens*{\frac{w_{\geq 0}(e)}{2^{\ell-1}} \cdot Q(n) + \Order(n^{-8})} = \frac{w_{\geq 0}(P)}{2^{\ell-1}} \cdot Q(n) + \Order(n^{-7}) \leq 2 Q(n) + \Order(n^{-7}).
\end{equation*}
In particular, by Markov's inequality with probability at least $\frac{1}{2}$ we cut at most twice that many edges, proving the claim as $t = 4Q(n) + 1$.

Recall now that the LDDs $\mathcal S_{\ell-1, r'}$ are sampled independently across the repetitions $r'$. Thus, the probability that in none of these samples we cut $P$ at most $t$ times is at most $2^{-R}$. Taking a union bound over all levels $\ell$, repetitions $r$ and vertices $v$, the total error probability is indeed at most $\poly(\log n) \cdot n \cdot 2^{-R} \leq n^{-c_{\mathrm{whp}}}$ as we have picked $R = (2 + c_{\mathrm{whp}}) \log n$.
\end{proof}

\begin{lemma}[Correctness] \label{lem:restricted-sssp-correctness}
With high probability, \cref{alg:restricted-sssp} correctly solves the Restricted SSSP problem.
\end{lemma}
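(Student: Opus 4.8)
The plan is to condition on the success event of Definition~\ref{def:success-event} (which holds with high probability by Lemma~\ref{lem:success-event}) and then prove by induction on the level~$\ell$, from bottom to top, that each potential $\phi_{\ell,r} \in \Phi_\ell$ reweights $G$ so that every edge \emph{not cut} by $\mathcal S_{\ell,r}$ becomes nonnegative. More precisely, the induction hypothesis I would carry is: for every $r \in [R]$ and every edge $e=(u,v)$ with $u \in S_i$, $v \in S_j$ (clusters of $\mathcal S_{\ell,r}$) and $i \leq j$, we have $w_{\phi_{\ell,r}}(e) \geq 0$; and moreover each $\phi^{(i)}$ satisfies $\phi^{(i)}(v) = \dist_{G[S_i]}(S_i,v)$, so in particular $-n < \phi^{(i)}(v) \leq 0$ for $v \in S_i$ (the lower bound coming from Lemma~\ref{lem:rep-paths-short}, since $|P| \leq 2^\ell \leq 2n^2$ forces $w(P) \geq -|P|$, and actually the restricted condition $w(e) \geq -1$ gives $w(P) \geq -|P| \ge -n$ once we also use the weak-diameter bound; I will need to be a little careful about which bound on $|P|$ is available and will likely just invoke $w_{\ge 0}(P)\le d$ together with $w(P)\le 0$ and $w(P) \ge -|P| \ge -w_{\ge 0}(P) \ge -2^\ell$, so the clean statement is $-2^\ell < \phi^{(i)} \le 0$, which is $\le 0$ and $> -n$ only at low levels — hence I should instead argue the bound $\phi^{(i)}(v) \in (-n, 0]$ directly from $|P| \le $ number of distinct vertices $< n$, using that a shortest $S_i$-$v$-path in $G[S_i]$ is simple).

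For the base case $\ell = 1$: here the diameter parameter is $d = 2$, and I would argue that inside each cluster $S_i$ there are essentially no negative edges to worry about on shortest paths — more carefully, $\Phi_0 = \{\mathbf 0\}$, so $\textsc{FewNegSSSP}(G[S_i], \mathbf 0, t)$ is called with the all-zero potential, and since a shortest $S_i$-$v$-path $P$ has $w_{\ge 0}(P) \le 2$ and each negative edge of $G$ has weight $\ge -1$, the number of negative edges on $P$ is bounded (indeed $P$ is simple with $w(P) \le 0$ and $w_{\ge 0}(P)\le 2$, so $P$ has at most $2$ nonnegative edges and thus $P$ has very few edges total unless it is almost entirely negative — but negative edges all have weight $-1$, and the weak-diameter / minimum-cycle-mean argument of Lemma~\ref{lem:rep-paths-short} bounds $|P| \le 2$); in any case $|P| \le t$, so $\textsc{FewNegSSSP}$ correctly returns $\boldsymbol d(v) = \dist_{G[S_i]}(S_i, v)$ by Lemma~\ref{lem:dijkstra-potential}. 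Then $\phi^{(1)},\dots,\phi^{(k)}$ are the correct Johnson potentials inside each cluster, making intra-cluster edges nonnegative, and the $-i\cdot n$ shift handles forward inter-cluster edges exactly as in the overview's Step~2: for $e=(u,v)$ with $u\in S_i$, $v\in S_j$, $i\le j$, $w_{\phi_{1,r}}(e) = w(e) + \phi^{(i)}(u) - (i-j)n - \phi^{(j)}(v) \ge (-1) + (-(n-1)) + 0 + 0 \ge -n$ if $i=j$ (but then it is actually $w_{\phi^{(i)}}(e)\ge 0$ by Johnson), and if $i<j$ then $(j-i)n \ge n$ dominates $w(e) + \phi^{(i)}(u) - \phi^{(j)}(v) \ge -1 - (n-1) = -n$... so I get $\ge 0$. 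I will write this inequality chain carefully.

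For the inductive step $\ell > 1$: fix $r$ and a cluster $S = S_i \in \mathcal S_{\ell,r}$ containing $v$. By the success event there is a shortest $S$-$v$-path $P$ in $G[S]$ and a repetition $r' \in [R]$ such that $\mathcal S_{\ell-1,r'}$ cuts at most $t$ edges of $P$. By the induction hypothesis applied at level $\ell-1$ to repetition $r'$, the potential $\phi_{\ell-1,r'}$ makes every uncut edge of $\mathcal S_{\ell-1,r'}$ nonnegative; hence $P$ contains at most $t$ edges that are negative with respect to $w_{\phi_{\ell-1,r'}}$. Since $\phi_{\ell-1,r'} \in \Phi_{\ell-1}$, the call $\textsc{FewNegSSSP}(G[S], \phi_{\ell-1,r'}, t)$ returns a value $\boldsymbol d_{\phi_{\ell-1,r'}}(v)$ with $\dist_{G[S]}(S,v) \le \boldsymbol d_{\phi_{\ell-1,r'}}(v) \le w(P) = \dist_{G[S]}(S,v)$, i.e.\ exactly $\dist_{G[S]}(S,v)$; taking the minimum over all $\phi \in \Phi_{\ell-1}$ (which only shrinks the value, but never below $\dist_{G[S]}(S,v)$ by the first guarantee of Lemma~\ref{lem:dijkstra-potential}) gives $\phi^{(i)}(v) = \dist_{G[S]}(S,v)$. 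So again $\phi^{(i)}$ is the true Johnson potential of $G[S_i]$, intra-cluster edges are nonnegative, $\phi^{(i)}(v) \in (-n, 0]$ (simple shortest path of $<n$ edges, weights $\ge -1$, total $\le 0$), and the $-i\cdot n$ shift handles forward inter-cluster edges exactly as in the base case. This closes the induction. Finally, at level $L$ the parameter is $d = 2^L \ge 2n^2$, and by the assumption stated right after Lemma~\ref{cor:topo-sorted-directed-ldd} (no edge is cut when all mutually-reachable pairs are within distance $d$) together with the fact that in a restricted graph any two mutually reachable vertices are within distance at most $n \cdot n = n^2 < 2^L$ in $G_{\ge 0}$ — here I use that a shortest path between them in $G_{\ge 0}$ is simple hence has $< n$ edges each of weight $\le n$ — the decomposition $\mathcal S_{L,1}$ cuts no edge at all. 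Hence $\phi_{L,1}$ makes every edge of $G$ nonnegative, which is exactly what $\textsc{RestrictedSSSP}$ returns.

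The main obstacle I anticipate is not any single deep step but rather getting the quantitative bounds on $\phi^{(i)}$ and on the level-$L$ diameter to line up cleanly with the choice $L = \lceil \log(2n^2)\rceil$ and the shift $M = n$: I need $-M < \phi^{(i)}(v) \le 0$ for the inter-cluster argument, and I need $2^L$ to exceed the worst-case $G_{\ge 0}$-distance between mutually-reachable vertices so that the top-level LDD is trivial. Both reduce to the observation that shortest paths (and shortest $S$-$v$-paths) in these graphs are simple and therefore have fewer than $n$ edges, combined with the restricted-graph weight bounds $-1 \le w(e) \le n$; I would isolate this as a short preliminary remark before running the induction so that the induction itself stays clean.
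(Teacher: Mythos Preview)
Your proposal is correct and follows essentially the same approach as the paper's proof: condition on the success event, then induct on~$\ell$ to show simultaneously that (a) each $\phi^{(i)}$ equals $\dist_{G[S_i]}(S_i,\cdot)$ (via Lemma~\ref{lem:dijkstra-potential}, using Lemma~\ref{lem:rep-paths-short} for the base case and the success event plus the level-$(\ell-1)$ hypothesis for the step), and (b) the shifted potential $\phi_{\ell,r}$ makes all uncut edges nonnegative (Johnson intra-cluster, and the $-i\cdot n$ shift together with $-(n-1)\le \phi^{(i)}\le 0$ inter-cluster); the paper organizes these as two interleaved statements $\mathcal A(\ell)$ and $\mathcal B(\ell)$ but the content is the same, including the final observation that at level~$L$ no edge is cut because mutually reachable vertices are within $G_{\ge 0}$-distance $(n-1)\cdot n < 2^L$.
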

\begin{proof}
We condition on the high-probability event that the algorithm is successful. Our goal is to show the following two statements $\mathcal A(\ell)$ and $\mathcal B(\ell)$ for all levels $\ell$:
\begin{itemize}
    \item $\mathcal A(\ell)$: For all $r \in [R]$ and for each $S_i \in \mathcal S_{\ell, r}$, it holds that $\phi^{(i)}(v) = \dist_{G[S_i]}(S_i, v)$.
    \item $\mathcal B(\ell)$: For all $r \in [R]$, it holds that $w_{\phi_{\ell, r}}(e) \geq 0$ for all edges $e$ not cut by $\mathcal S_{\ell, r}$.
\end{itemize}
We prove both statements by induction on $\ell$ as follows.
\begin{itemize}
    \item \emph{Base case: $\mathcal A(1)$.}
    Focus on any repetition and any cluster $S_i \in \mathcal S_{1, r}$ of weak diameter at most $2$. From \cref{lem:rep-paths-short} it follows that any shortest $S_i$-$v$-path $P$ has hop-length at most~$2$, and thus we compute $\boldsymbol{d}_\phi(v) = \dist_{G[S_i]}(S_i, v)$ by calling $\Call{FewNegSSSP}{G[S_i], \phi, t}$ for $t \geq 2$ (irrespective of the potential $\phi$). The claim (a) follows as we pick~\smash{$\phi^{(i)}(v) = \boldsymbol{d}_\phi(v)$} for the all-zero potential $\phi$.

    \item \emph{Inductive case: $\mathcal B(\ell-1)$ implies $\mathcal A(\ell)$.}
    Focus on any repetition $r$ and any cluster $S_i \in \mathcal S_{\ell, r}$. Recall that by the guarantee of \cref{lem:dijkstra-potential} it holds that (i) $\dist_{G[S_i]}(S_i, v) \leq \boldsymbol{d}_\phi(v)$ for all $\phi$. Moreover, conditioning on the event that the algorithm is successful there is some shortest $S_i$-$v$-path~$P$ in $G[S_i]$ and some repetition $r'$ such that $P$ is cut at most~$t$ times in the clustering $\mathcal S_{\ell-1, r'}$. Therefore, by the assumption $\mathcal B(\ell-1)$, reweighting~$G$ with the potential function~\makebox{$\phi_{\ell-1, r'} \in \Phi_{\ell-1}$} makes all but at most $t$ edges in $P$ nonnegative. Thus, \cref{lem:dijkstra-potential} implies that (ii) $\boldsymbol{d}_{\phi_{\ell-1, r'}}(v) \leq w(P) = \dist_{G[S_i]}(S_i, v)$. Combining~(i) and~(ii), we obtain that indeed
    \begin{equation*}
        \phi^{(i)}(v) = \min_{\phi \in \Phi_{\ell-1}} \boldsymbol{d}_\phi(v) = \dist_{G[S_i]}(S_i, v)
    \end{equation*}
    for all nodes $v \in S_i$.
    \item \emph{Inductive case: $\mathcal A(\ell)$ implies $\mathcal B(\ell)$.}
    Focus on any repetition $r$ and any edge $e = (u, v)$ that is not cut in $\mathcal S_{\ell, r}$. If $u$ and $v$ belong to the same cluster $S_i$, then $w_{\phi_{\ell, r}}(e) = w_{\phi^{(i)}}(e) \geq 0$ by Johnson's trick (\cref{obs:johnson}) since~\smash{$\phi^{(i)} = \dist_{G[S_i]}(S_i, v)$} (by assumption $\mathcal A(\ell)$). So suppose that $u$ and~$v$ belong to different clusters, $u \in S_i, v \in S_j$. We assume that $e$ is not cut, so $i < j$. Therefore:
    \begin{align*}
        w_{\phi_{\ell, r}}(e) &= w(e) + (\phi^{(i)}(u) - i \cdot n) - (\phi^{(j)}(v) - j \cdot n) \\
        &= w(e) + \phi^{(i)}(u) - \phi^{(j)}(v) + (j - i) n \\
        &\geq -1 -(n-1) + 0 + n = 0.
    \end{align*}
    In the last step we used that $G$ is restricted and so $w(e) \geq - 1$ and $-(n-1) \leq \phi^{(i)}(v) \leq 0$ (recalling by assumption $\mathcal A(\ell)$ that $\phi^{(i)}(u) = \dist_{G[S_i]}(S_i, u)$).
\end{itemize}
Altogether the induction shows that $B(L)$ holds, i.e., that for all $r$ we have~\makebox{$w_{\phi_{L, r}}(e) \geq 0$} for all edges $e$ not cut by $\mathcal S_{L, r}$. However, at the $L$-th level we may assume that the LDD does not cut any edges in $G$ as any pair of nodes reachable from each other is trivially within distance at most $(n-1) \cdot n \leq n^2$. In particular, it follows that $w_{\phi_{L, 1}}(e) \geq 0$ for all edges $e \in E$.
\end{proof}

\begin{lemma}[Work and Span] \label{lem:restricted-sssp-work}
\cref{alg:restricted-sssp} has work $W_{\geq 0}(n, m) \cdot \poly(\log n)$ and span $S_{\geq 0}(n, m) \cdot \poly(\log n)$.
\end{lemma}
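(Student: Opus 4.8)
The plan is to bound the total work and span by summing over all levels $\ell \in [L]$ and repetitions $r \in [R]$, and within each such iteration accounting for the cost of \textsc{DirLDD} and of the many calls to \textsc{FewNegSSSP}. Since $L = O(\log n)$ and $R = O(\log n)$, it suffices to show that each individual iteration $(\ell, r)$ has work $W_{\geq 0}(n, m) \cdot \poly(\log n)$ and span $S_{\geq 0}(n, m) \cdot \poly(\log n)$; the outer loops are sequential and contribute only a $\poly(\log n)$ factor.

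First I would handle the call to $\textsc{DirLDD}(G_{\geq 0}, 2^\ell)$: by \cref{cor:topo-sorted-directed-ldd} this costs $W_{\geq 0}(n, m) \cdot \poly(\log n)$ work and $S_{\geq 0}(n, m) \cdot \poly(\log n)$ span, using that $w_{\geq 0}$ is polynomially bounded (weights in $\{0, \dots, n^c\}$ since $G$ is restricted, hence $w(e) \leq n$). Next, the inner work is the parallel-for over clusters $i \in [k]$, where for each $i$ and each $\phi \in \Phi_{\ell-1}$ we invoke $\textsc{FewNegSSSP}(G[S_i], \phi, t)$. Here $t = 4Q(n)+1 = O(\log^2 n)$ and $|\Phi_{\ell-1}| = R = O(\log n)$, so there are $O(\log n)$ such calls per cluster. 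By \cref{lem:dijkstra-potential} each call on $G[S_i]$ with $n_i := |S_i|$ vertices and $m_i := |E(G[S_i])|$ edges has work $O(W_{\geq 0}(n_i t, m_i t))$ and span $O(S_{\geq 0}(n_i t, m_i t))$; I need to check that the potentials $\phi$ fed in are polynomially bounded, which follows by an easy induction — the potentials $\phi^{(i)}$ are shortest-path distances in restricted induced subgraphs with $\poly(n)$-bounded weights, so they stay $\poly(n)$, and the subtraction of $i \cdot n \leq n^2$ keeps $\phi_{\ell,r}$ polynomially bounded. Finally I would sum the per-cluster costs: the clusters $S_i$ partition $V$, so $\sum_i n_i = n$ and $\sum_i m_i \leq m$, and since $t = \poly(\log n)$ we have $\sum_i W_{\geq 0}(n_i t, m_i t) \leq W_{\geq 0}(nt, mt) \leq W_{\geq 0}(n, m) \cdot \poly(\log n)$ by the additivity assumption \eqref{eq:work-span-additive} together with the standard fact that $W_{\geq 0}$ scales at most polynomially in a multiplicative blow-up of its arguments (alternatively one may absorb the factor $t$ by noting $W_{\geq 0}(nt, mt) \leq t \cdot W_{\geq 0}(n \cdot t, m \cdot t)/\dots$; more cleanly, $W_{\geq 0}$ is at most $\poly$ and $\tilde O(mt) = \tilde O(m)$). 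For the span, the parallel-for contributes a $\max$ over clusters, so $\max_i S_{\geq 0}(n_i t, m_i t) \leq S_{\geq 0}(nt, mt) \leq S_{\geq 0}(n, m) \cdot \poly(\log n)$, and the inner sequential loop over $\phi \in \Phi_{\ell-1}$ adds only an $O(\log n)$ factor.

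The main obstacle I anticipate is the bookkeeping around the factor $t$ inside the arguments of $W_{\geq 0}$ and $S_{\geq 0}$: the assumption \eqref{eq:work-span-additive} only gives subadditivity, not any bound on how $W_{\geq 0}(nt, mt)$ compares to $t \cdot W_{\geq 0}(n, m)$. The clean resolution is that $t = O(\log^2 n)$, so $nt, mt$ differ from $n, m$ by at most a polylogarithmic factor, and for the concrete target bound $W_{\geq 0}(n,m) = \tilde O(m)$ (and more generally for any $W_{\geq 0}$ that is $\poly(n) \cdot \tilde O(m)$) we directly get $W_{\geq 0}(nt, mt) = \tilde O(mt) = \tilde O(m) = W_{\geq 0}(n,m) \cdot \poly(\log n)$, and similarly for span; so the lemma should be stated and proved with this mild monotonicity-type assumption made explicit, exactly as the surrounding text already does when it writes $\tilde O(nt, mt)$-type bounds. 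Everything else — the two sequential outer loops, the partition property $\sum_i n_i = n$, $\sum_i m_i \leq m$, and the polynomial boundedness of all intermediate potentials — is routine.
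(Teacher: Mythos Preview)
Your proposal is correct and follows essentially the same approach as the paper: account for the $LR=\poly(\log n)$ calls to \textsc{DirLDD}, use the partition $\sum_i n_i=n$, $\sum_i m_i\le m$ together with the superadditivity assumption \eqref{eq:work-span-additive} to aggregate the \textsc{FewNegSSSP} calls, and verify that all weights and potentials stay polynomially bounded. If anything, you are more explicit than the paper about the one delicate point, namely bounding $W_{\ge 0}(nt,mt)$ by $W_{\ge 0}(n,m)\cdot\poly(\log n)$; the paper simply absorbs the factor $t=\Order(\log^2 n)$ into the phrase ``up to polylogarithmic factors,'' which indeed requires the kind of mild growth assumption you spell out.
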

\begin{proof}
The main bottleneck to both work and span is the calls to \textsc{DirLDD} and \textsc{FewNegSSSP}. Each call to \textsc{DirLDD} runs in work $W_{\geq 0}(n, m)$ and span $S_{\geq 0}(n, m)$ up to polylogarithmic factors (by \cref{cor:topo-sorted-directed-ldd}), and there are at most $LR = \poly(\log n)$ such calls. For each iteration of the loop in Line~9, the calls to $\textsc{FewNegSSSP}$ in Line~11 are executed on disjoint induced graphs~$G[S_i]$ (up to an overhead of $L R \cdot \max_\ell |\Phi_\ell| \leq L R^2 = \poly(\log n)$). By the assumption that $W_{\geq 0}(n, m)$ is subadditive (\Cref{eq:work-span-additive}) the total work is thus at most $W_{\geq 0}(n, m)$ and the total span is at most $S_{\geq 0}(n, m)$ (both up to polylogarithmic factors). Any work outside these calls trivially takes work $\tilde\Order(n + m)$ and span $\poly(\log n)$.

As a final detail, we remark that each call to \textsc{DirLDD} as well as to \textsc{FewNegSSSP} is indeed on graphs with polynomially bounded weights (in fact, weights $\set{0, \dots, n}$) and with polynomially bounded potentials (in fact, $|\phi(v)| \leq n^2$ for all $v$).
\end{proof}

\bibliographystyle{plainurl}
\bibliography{refs}

\end{document}